\newtheorem{lemma}{Lemma}
\newtheorem{cor}{Corollary}
\newtheorem{claim}{Claim}
\title{Generalized LR-drawings of trees}
\author{Therese Biedl\thanks{David R.~Cheriton School of Computer Science,
        University of Waterloo, Canada {\tt \{biedl,jayson.lynch\}@uwaterloo.ca}}
        \and
\and Giuseppe Liotta%
\thanks{Department of Engineering, University of Perugia, Italy {\tt \{giuseppe.liotta,fabrizio.montecchiani\}@unipg.it}}
\and Jayson Lynch%
\addtocounter{footnote}{-2}
\footnotemark
\and Fabrizio Montecchiani%
\footnotemark
}
\date{}
\begin{document}

\maketitle

\begin{abstract}
The {\em LR-drawing-method} is a method of drawing an ordered rooted binary tree based
on drawing one root-to-leaf path on a vertical line and attaching
recursively obtained drawings of the subtrees on the left and right.
In this paper, we study how to generalize this
drawing-method to trees of higher arity.  We first prove
that (with some careful modifications) the proof of existence
of a special root-to-leaf path transfers to trees of
higher arity.  Then we use such paths to obtain \emph{generalized} LR-drawings
of trees of arbitrary arity.
\end{abstract}

\section{Introduction}\label{se:intro}

Tree-drawing is a very popular topic in the graph drawing literature.
Nearly all tree-drawing methods required that the drawing is {\em planar}
(has no crossing), but there are many other variations, depending on whether
we demand that the drawing is {\em straight-line} (as opposed to permitting
bends in edges) and/or {\em order-preserving} (a given order of edges at each
node is maintained).  For a rooted tree, one also distinguishes by
whether the drawing must be
{\em (strictly) upward} (nodes are (strictly) above their descendants).
In all our drawings, we assume that nodes (and also bends, if there are
any) are placed at {\em grid-points}, i.e., points with integer coordinates.
The main objective is to obtain drawings of small {\em area}, measured
via the number of grid-points in the minimum enclosing bounding box of the
drawing.  Sometimes one also considers the {\em width} and
{\em height} of the drawing, measured by the number of columns (respectively rows) that intersect the drawing.
We refer to a survey by Di Battista and Frati \cite{DF14} for many results
up to 2014, and to Chan's recent paper \cite{Chan20} for some development
since.

In 2002, Chan \cite{Chan02} published a tree-drawing paper that became the
inspiration for many follow-up papers.  In particular, he studied rooted
trees and only considered {\em ideal drawings}, i.e., drawings that
respect all four of the above constraints (planar, straight-line,
order-preserving and strictly upward).  His area-results
were superseded by later improvements
\cite{GR03,Bie17-OPTI,Chan20}, but the techniques introduced in \cite{Chan02}
are still widely useful; see e.g.~a recent paper by Frati, Patrignani and
Roselli \cite{FPR20} that uses Chan's recursive approaches to obtain
small straight-line drawings of  outer-planar graphs.

\paragraph{Background and related results.}
One of the methods proposed by Chan \cite{Chan02} is the one that creates
{\em LR-drawings}.  
The idea is to take a root-to-leaf path $\pi$, drawing it
vertically, and attach the left and right subtrees of
the path just below their parent, using a recursively
obtained drawing for the subtree.  See Figure~\ref{fig:LR}, where the nodes of $\pi$ are white (as in all other figures). 
These drawings are defined only for binary trees.

To describe this precisely, we need a few definitions.
Let $T$ be a rooted binary tree that comes with a fixed order of children at each node (we call this an {\em ordered rooted binary tree}).
A {\em root-to-leaf} path 
$\pi=\langle v_1,v_2,\dots,v_\ell\rangle$  is a path in $T$ where
$v_1$ is the root and $v_\ell$ is a leaf.
A {\em left} subtree of a path $\pi$ is 
a subtree rooted at some child $c$ of a vertex
$v_i\in \pi$ for which $c$ comes before the path-child in the order at $v_i$.  We call this a {\em left subtree at $v_i$} when needing to specify the node of $\pi$.
A {\em right} subtree is defined symmetrically.   

The {\em LR-drawing-method} consists of picking a root-to-leaf path $\pi$,
drawing it vertically,
and attaching (recursively obtained) drawings of the subtrees of 
$T\setminus V(\pi)$ on the left and right side so that the
order is maintained.    Since $T$ is binary, there is only
one such subtree at each $v\in \pi$; we place its drawing in the
rows just below $v$ (after lengthening edges of $\pi$ as needed)
and one unit to the left/right of path $\pi$.

\begin{figure}[ht]
\hspace*{\fill}
\includegraphics[page=1]{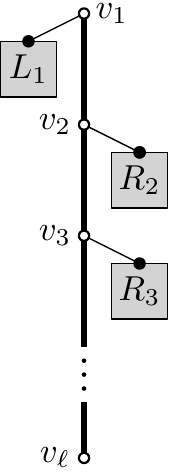}
\hspace*{\fill}
\caption{An LR-drawing where $\pi = \langle v_1, v_2, \ldots v_\ell \rangle$ and $L_i$ and $R_i$ are various subtrees.}
\label{fig:LR}
\end{figure}


Let $W(n)$ be the maximum (over all binary trees $T$ with $n$ nodes)
of the width of the
drawing, and note that it observes the following recursion: 
\begin{equation}\label{eq:main}
W(n) \leq 1+\min_\pi \left( 
	\max_{\alpha} W(|\alpha|) + 
	\max_{\beta} W(|\beta|) \right)
\end{equation}
where $\alpha$ and $\beta$ are the maximum left and right subtree,
respectively.
We are therefore interested in picking a path $\pi$ that
has useful bounds on the size of $\alpha$ and $\beta$.
Chan showed that there exists a path such
that $|\alpha|+|\beta|\leq n/2$, for any left and right subtrees $\alpha$ and $\beta$.
He then improved this to the following:

\begin{lemma}\cite{Chan02}
\label{lem:Chan}
Let $p=0.48$.  Given any binary ordered rooted tree $T$ of size $n$, there exists a root-to-leaf
path $\pi$ such that for any left subtree $\alpha$ and any right subtree $\beta$
of $\pi$, $|\alpha|^p+|\beta|^p \leq (1-\delta)n^p$ for some constant
$\delta>0$.
\end{lemma}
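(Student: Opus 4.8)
The plan is to prove the statement by strong induction on $n$, but to make the induction go through one must first strengthen it so that it can cope with subtrees that have already been cut off further up in the recursion. A natural strengthening is: for every ordered rooted binary tree $T$ of size $n$ and every pair of reals $\ell,r\ge 0$ there is a root-to-leaf path $\pi$ such that, with $\lambda$ and $\rho$ the sizes of the largest left and largest right subtree of $\pi$ (taking $0$ for an empty set), $\max(\lambda,\ell)^p+\max(\rho,r)^p$ is bounded by an explicit function of $n,\ell,r$ that reduces to $(1-\delta)n^p$ when $\ell=r=0$; the parameters $\ell,r$ record the sizes of the subtrees already cut off on the two sides before $\pi$ is entered. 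With such a statement in hand the inductive step is driven by a case analysis at the root, governed by a threshold $\tau\in(0,\tfrac12)$ applied to the size $b$ of the smaller of the (at most two) subtrees hanging off the root; when the root has fewer than two children the step is immediate since nothing is cut off there.

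When the root is \emph{unbalanced}, i.e.\ $b<\tau n$, I descend into the larger child and cut off the small subtree of size $b$ to whichever side it lies on, then recurse with the corresponding parameter among $\ell,r$ raised to $\max(\cdot,b)$. A single such step does not decrease $n$ by enough to pay for the term $b^p$, so I would iterate it, walking down the path of larger children (the ``heavy path'') and peeling off small subtrees, until I reach either a leaf or a node whose smaller subtree has size at least $\tau$ times the current subtree size. Every subtree cut off in this phase has size $<\tau n$, so it alters $\lambda$ and $\rho$ by at most $(\tau n)^p$ each, which forces the constraint $2\tau^p\le 1-\delta$ (small cuts may land on both sides). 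When the root — or, after the peeling phase, the current node — is \emph{balanced}, i.e.\ $b\ge\tau n$, I instead commit to one side and follow the monotone path on that side: if the larger child is the left child, the leftmost root-to-leaf path cuts nothing off to the left, so $\lambda=0$, while every subtree cut off to the right is either the smaller child or lies strictly inside the larger child and hence has size at most $n-1-b<(1-\tau)n$, so $\rho<(1-\tau)n$; the symmetric choice handles a larger right child. This gives the bound provided $(1-\tau)^p\le 1-\delta$, with room to absorb parameters $\ell,r$ that were at most $\tau n$.

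The subtle point — and the one I expect to be the main obstacle — lies at the interface between the two phases. After the peeling phase one may already have cut a subtree of size close to $\tau n$ to, say, the left, and then be forced inside the balanced part to cut a subtree of size close to the current size to the opposite side; since $x^p+y^p>(x+y)^p$ for $p<1$, these two contributions cannot simply be summed, and indeed the cruder path with $|\alpha|+|\beta|\le n/2$ alluded to above only yields $|\alpha|^p+|\beta|^p\le 2^{1-2p}n^p$, which is useless once $p<\tfrac12$. Overcoming this requires (i) choosing the monotone direction in the balanced phase so that the unavoidable large cut lands on the side that already carries the larger accumulated subtree, so that it is absorbed into an existing maximum rather than added to a fresh one, and (ii) exploiting that a cut of size close to $\tau n$ during the peeling phase can only occur while the current subtree is still almost all of $T$, so that the large cut in the balanced phase is then substantially smaller than the bound ``current size'' would suggest. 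Carrying out this bookkeeping and choosing $\tau$ and $\delta$ so that all the accumulated inequalities — those from the peeling phase, from the balanced phase, from the absorption step, and from the base case where $T$ is a single node and one needs $\ell^p+r^p$ to stay below the claimed bound — are simultaneously satisfiable is what pins down the admissible range of the exponent; $p=0.48$ lies inside that range, with $\delta>0$ correspondingly small, whereas any $p\ge\tfrac12$ is excluded.
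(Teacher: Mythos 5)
There is a genuine gap, and it sits exactly at the point you flag as ``the main obstacle'': the peel-then-commit strategy does not achieve $p=0.48$, and your proposed fixes (i) and (ii) do not rescue it. Concretely, take $\tau=2^{-1/p}\approx 0.236$ (the largest value allowed by your constraint $2\tau^p\le 1$) and consider a tree where the peeling phase first cuts a subtree of size $0.23n$ to the left, then one of size $0.18n$ to the right (both below the respective thresholds), arriving at a balanced node of size $m\approx 0.59n$ whose smaller child has size exactly $\tau m\approx 0.14n$. Whichever monotone direction you now choose, some subtree of size up to $(1-\tau)m\approx 0.45n$ is cut to one side; even placing it on the side already carrying $0.23n$ (your fix (i)) and accounting for the shrinkage of $m$ (your fix (ii)), the best achievable is $\max\{\lambda,\rho\}\approx 0.45n$ against $\approx 0.18n$ on the other side, and $(0.45)^{0.48}+(0.18)^{0.48}\approx 1.12>1$. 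So the invariant is violated no matter how the constants are tuned; the scheme caps out below $p=0.48$. The underlying reason is that ``balanced'' in the sense of a size threshold is the wrong trigger for committing to a monotone path.

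The paper's proof (given for the generalization, Lemma~\ref{lem:path}, of which the binary case is Lemma~\ref{lem:Chan}) uses a different mechanism precisely to avoid this. It maintains the invariant $|\alpha_i|^p+|\beta_i|^p\le(1-\delta)n^p$ on the largest subtrees cut so far and calls a child \emph{feasible} if extending the path into it preserves the invariant. As long as exactly one child is feasible it keeps extending one node at a time (in the numerical example above, the large child is still feasible at the ``balanced'' node, so the walk continues rather than committing); it only terminates when \emph{two} children are feasible, at which point it descends the monotone path into the smaller of the two, so that every subtree cut below is bounded by the size of the other feasible child and is therefore already covered by that child's feasibility inequality. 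The heart of the proof is then showing that some child is always feasible: assuming none is, one extracts several pairwise disjoint subtrees from the violation inequalities and combines them via H\"older's inequality to contradict the fact that their total size is at most $n$ --- and it is this H\"older computation, not a threshold calculation, that pins down $p=0.48$. Your proposal contains no analogue of either the feasibility-driven stopping rule or the disjointness/H\"older argument, and these are the essential ideas.
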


Lemma~\ref{lem:Chan}, together with Equation~\ref{eq:main}, and an inductive argument, implies the existence of LR-drawings of width $O(n^{0.48})$
(and the height is $n$, as it is for all LR-drawings constructed as
described above).  Later on, Frati et al.~\cite{FPR20} showed that 
for some binary trees, a width of $\Omega(n^{0.418})$ is required
in any LR-drawing, and this was improved further to $\Omega(n^{0.429})$
by Chan and Huang \cite{CH20}.  The latter paper also gave another
construction-method that 
does not follow the above method exactly,
instead the chosen path $\pi$ may have some non-vertical edges while some left or right subtree of $\pi$ complete the vertically drawn path.
In this way, they can obtain drawings
of width $O(n^{0.437})$.

As should be clear from the above lower bounds, LR-drawings are not
the best tool for small-area ideal tree-drawings, since other papers
can achieve width $O(\log n)$ (or better if the pathwidth is
small) \cite{Bie17-OPTI,GR03} while keeping the height at $n$.  But
LR-drawings have a number of other appealing features:
\begin{itemize}
\item Drawings of disjoint rooted subtrees are ``vertically separated'',
	i.e., if $T_v$ and $T_w$ are two disjoint rooted subtrees, then
	there exists some horizontal line that separates the drawings of
	$T_v$ and $T_w$.  (This can be seen by studying the construction
	at the lowest common ancestor of the two trees.)  As such,
	the drawings are perhaps easier to understand than drawings
	created with other methods (such as~\cite{GR03}) that delay
	the drawing of a subtree until further down, leading to `interleaved'
	drawings of subtrees.
\item LR-drawings (and in particular Lemma~\ref{lem:Chan}) has been used
	for a number of graph drawing results, including for octagonal
	drawings, orthogonal drawings, and drawings of outer-planar graphs
	\cite{DBF09,Frati07,GR07}.
	\item Last but not least, ``the question on LR-drawings is still interesting 
	and natural, as it is fundamentally about combinatorics of trees,
	or more specifically, decompositions of trees via path separators''
	\cite{CH20}.
\end{itemize}

\paragraph{Contribution.} Our interest in LR-drawings originally came
from the need to generalize Lemma~\ref{lem:Chan} to rooted trees with
higher {\em arity}, i.e., maximum number of children at a node.  (As we
will detail in a separate, forthcoming, paper, such a lemma for ternary
trees can be used to obtain drawings of  outer-1-planar graphs
with smaller area.)      In the process, we discovered that all the
results and applications of LR-drawings seem to be only concerned with
{\em binary trees}.  It is not even clear what exactly an LR-drawing
should be for trees of higher arity, and no area-bounds are known.
Our results in this paper are as follows:

\begin{itemize}
    \item We first show (in Lemma~\ref{lem:paths} in Section~\ref{sec:path})
	 that Lemma~\ref{lem:Chan}  holds for trees of arbitrary arity.  

There does not seem to be an easy way to derive this result from
Chan's result, since it is
not clear how we could modify a rooted tree $T$ into a binary tree without
either increasing the number of nodes or missing a subtree that
may be too big.  For this reason, we re-prove the result from scratch.
The proof is similar in structure to the one by Chan, but we need to be
much more careful in defining the inequalities that hold if we can
extend the path to a subtree.

    \item We then discuss in Section~\ref{sec:generalizedLR} what the
	appropriate generalization of LR-drawings to trees of higher
	arity should be.  We also give a simple construction that shows
	that ideal LR-drawings of area $O(n^2)$  always exist.

    \item In Section~\ref{sec:1bend} and \ref{sec:NonUpward}, we then
	give constructions for generalized LR-drawings that are directly based on
	Lemma~\ref{lem:path} and therefore achieve $O(n^{0.48})$ width.    
	Unfortunately, neither construction gives ideal drawings:
	the first one has one bend per edge, and the second one is not
	upward.  Both constructions can be modified to achieve ideal 
	drawings, but at the expense of increasing the height (possibly more than polynomially).

    \item In Section~\ref{sec:Upward}, we give a construction for
	generalized LR-drawings that are ideal drawings.  The price to
	pay is that the construction is more complicated, and the height (which was linear in the previous constructions)
	increases to $O(n^{1.48})$, meaning that the area is only just
	barely sub-quadratic, namely  $O(n^{1.96})$.
\end{itemize}

We end in Section~\ref{sec:remarks} with open questions.

\section{Choosing a path}
\label{sec:path}

In this section, we show that Lemma~\ref{lem:Chan} can be generalized to
any ordered rooted tree, regardless of its arity.  

\begin{lemma}
\label{lem:path}
\label{lem:paths}
Let $p=0.48$.  Given any ordered rooted tree $T$ of size $n$, there exists a root-to-leaf path $\pi$ in $T$ such that for any left subtree $\alpha$ and any right subtree $\beta$ of $\pi$, $|\alpha|^p+|\beta|^p\leq (1-\delta)n^p$ for some constant $\delta>0$.
\end{lemma}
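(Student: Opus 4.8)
The plan is to mimic Chan's inductive strategy but to handle the unbounded arity explicitly. I would pick the path greedily from the root: having reached a node $v_i$, let $c$ be the child of $v_i$ whose subtree $T_c$ is largest, and set $v_{i+1}=c$. This ensures that every subtree hanging off the path at $v_i$ — whether a left subtree or a right subtree — has size at most $|T_{v_i}|$, but more importantly, all of them together with $T_{v_{i+1}}$ partition $T_{v_i}\setminus\{v_i\}$, so $\sum_{\text{subtrees }\gamma\text{ at }v_i}|\gamma| = |T_{v_i}| - 1 - |T_{v_{i+1}}|$. The key quantity to control is, for each $i$, the largest left subtree $\alpha_i$ and largest right subtree $\beta_i$ at $v_i$; I want $|\alpha_i|^p+|\beta_i|^p \le (1-\delta)n^p$ simultaneously for all $i$, with a uniform $\delta>0$.

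The heart of the argument is a potential/weight function. Following Chan, I would track, along the path, a ``budget'' based on $n^p$ minus the $p$-th powers of the sizes of left and right subtrees already encountered above $v_i$. Concretely, define $f_i = \big(\sum_{j<i}|\alpha_j|\big)$ and $g_i = \big(\sum_{j<i}|\beta_j|\big)$ or rather work directly with the maximal subtrees: because $p<1$, the function $x\mapsto x^p$ is subadditive, so replacing a collection of left subtrees by their sizes and using $|T_{v_i}| \le n - f_i - g_i$ gives control on how big $|\alpha_i|^p+|\beta_i|^p$ can be. The subtlety that does not arise in the binary case is that at a single node $v_i$ there can be many left subtrees and many right subtrees, and the ``second largest'' ones matter: if the two largest left subtrees at $v_i$ both have size close to $|T_{v_i}|/2$, this is actually good news (neither alone is large), whereas a single dominant left subtree is the dangerous case — but a single dominant subtree means we could have routed the path into it. So I would do a case analysis at each $v_i$: either (a) the largest subtree off the path is ``small'' relative to $|T_{v_i}|$, in which case $|\alpha_i|^p$ and $|\beta_i|^p$ are each bounded away from $|T_{v_i}|^p \le n^p$ with room to spare; or (b) there is a dominant subtree, in which case the path continues into it (by the greedy choice $|T_{v_{i+1}}|$ is at least as large as that dominant subtree), and the ``loss'' $|T_{v_i}| - |T_{v_{i+1}}|$ charged to this step is small, so the accumulated budget is barely touched and we can afford a later large subtree.

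I expect the main obstacle to be choosing the right invariant so that it is simultaneously (i) strong enough to yield $|\alpha_i|^p+|\beta_i|^p\le(1-\delta)n^p$ at every node, and (ii) preserved down the path in all cases, including the annoying mixed case where one of $\alpha_i,\beta_i$ is large and the other is moderate. Chan's choice of $p=0.48$ (rather than $1/2$) is exactly what creates slack in this inequality — $0.48+0.48<1$ with room, and $2\cdot 2^{-p}<1$ — and I would need to verify that the same slack survives when ``$\alpha$'' and ``$\beta$'' each range over a whole family of sibling subtrees rather than a single one. I would formalize this by proving, by induction on the depth along $\pi$, a statement of the form ``$A_i^p + B_i^p \le (1-\delta)n^p$'', where $A_i$ (resp.\ $B_i$) is the sum of sizes of all left (resp.\ right) subtrees at $v_i$ and at every ancestor of $v_i$ on $\pi$ — since $|\alpha|\le A_i$ for any left subtree $\alpha$ at any $v_j$ with $j\le i$, and $x^p$ is monotone, this implies the lemma. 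The inductive step reduces to a one-variable inequality about how $x^p$ behaves under splitting off a piece, which is where I would invoke $p=0.48$ to close the gap; the constant $\delta$ can be extracted from the strict inequality $2^{1-p}<2$, taken uniformly over the (finitely many relevant) configurations.
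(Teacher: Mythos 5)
There is a genuine gap: both pillars of your plan fail. First, the greedy ``follow the heaviest child'' rule does not produce a path with the required property, even for binary trees. Take a root $r$ with a left subtree $T_a$ of size $0.45n$ that is a bare path, and a right subtree $T_b$ of size $0.55n$ whose root has a left child of size $0.28n$ and a right child of size $0.27n$. Greedy goes into $T_b$ and then into its left child, leaving a left subtree of size $0.45n$ and a right subtree of size $0.27n$ on the path; but $(0.45)^{0.48}+(0.27)^{0.48}\approx 1.21>1$, so the conclusion fails for the greedy path (while the path through $T_a$ works, with $|\beta|^p\approx 0.75\,n^p$ and no left subtrees). The danger is precisely a large left subtree forced high on the path interacting with a large right subtree far below it, and a local greedy rule cannot look ahead to avoid this. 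Second, your proposed invariant $A_i^p+B_i^p\le(1-\delta)n^p$, with $A_i,B_i$ the \emph{sums} of all left/right subtree sizes along the path, is simply false for trees of high arity with any uniform $\delta$: if the root has $d$ equal children, any path leaves off-path subtrees of total size $(d-1)(n-1)/d\to n$, so one of $A_i^p,B_i^p$ tends to $n^p$ as $d$ grows, even though each individual subtree has size $n/d$ and the lemma's actual conclusion holds comfortably. This is exactly the point where the generalization to unbounded arity is delicate --- one must bound individual subtrees, not their sum, and subadditivity of $x^p$ pushes in the wrong direction here.

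The paper's proof takes a different route that you would need to adopt in some form. It calls a child subtree \emph{feasible} if extending the path into it preserves the invariant $|\alpha_i|^p+|\beta_i|^p\le(1-\delta)n^p$ for the largest left/right subtrees seen so far; if exactly one child is feasible the path extends into it, and if two are feasible the construction \emph{terminates early} by diving down the leftmost (or rightmost) path of the smaller of the two, so that all newly created subtrees on one side are dominated by the other feasible sibling. The substance of the proof is showing that at least one child is always feasible: assuming none is, one extracts from the three possible ``violations'' of feasibility a collection of pairwise disjoint subtrees (at $v_i$ and at the node $v_j$ above where $\beta_i$ hangs) whose $p$-th powers each exceed roughly $(1-\delta)n^p$ in suitable pairs, and H\"older's inequality then forces their total size to exceed $n$, a contradiction. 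Your sketch contains no mechanism for this global packing argument, nor for the early-termination step, and the one-variable inequality you hope to reduce to does not exist because the invariant it would preserve is false.
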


\begin{proof}
We will iteratively expand path $\pi=\langle v_1,\dots,v_i\rangle$ to get 
closer to a leaf, and let $\alpha_i,\beta_i$ be the largest left/right subtree 
of this path (not considering the subtrees at $v_i$).  
Initially set $v_1$ to be the root.  We maintain the invariant that
$|\alpha_i|^p+|\beta_i|^p\leq (1-\delta)n^p$ for every $i$; this holds
vacuously initially.

Now assume that path $\pi$ up to $v_i$ for some $i\geq 0$ has been chosen.  
Let $S_i^{(1)},\dots,S_i^{(d_i)}$ be the
subtrees at $v_i$, enumerated from left to right.  Call such a subtree
$S_i^{(k)}$ {\em feasible} if we could use its root to extend $\pi$.
Thus $S_i^{(k)}$ is feasible if 
\begin{eqnarray*}
\max\left\{|\alpha_i|,|S_i^{(1)}|,\dots,|S_i^{(k-1)}|\right\}^p \\
+\max\left\{|\beta_i|,|S_i^{(k+1)}|,\dots,|S_i^{(d_i)}|\right\} ^p
& \le & (1-\delta)n^p.
\end{eqnarray*}
For future reference we note that $S_i^{(k)}$ is 
{\em infeasible} if one of the following
three {\em violations} occurs:  
\begin{enumerate}
\item  $|\alpha_i|^p + |S_i^{(\ell)}|^p > (1-\delta)n^p$
for some $\ell>k$.
\item $|S_i^{(h)}|^p + |\beta_i|^p > (1-\delta)n^p$ for some $h<k$.
\item $|S_i^{(h)}|^p + |S_i^{(\ell)}|^p > (1-\delta)n^p$ for some $h<k<\ell$.
\end{enumerate}

\medskip\noindent{\bf Case 1:} Exactly one subtree $S_i^{(k)}$ is feasible.  Then we set $v_{i+1}$ to be the root of $S_i^{(k)}$.  The invariant holds 
by choice of ``feasible''.

\medskip\noindent{\bf Case 2:} At least two subtrees $S_i^{(k)}$, $S_i^{(\ell)}$ (with $k<\ell$) are feasible.  We terminate the construction as follows.

Consider first the subcase where $|S_i^{(k)}|\leq |S_i^{(\ell)}|$.  Set path $\pi$ to be the concatenation of $\langle v_1,\dots,v_i \rangle$  with the leftmost path in $S_i^{(k)}$ down to a leaf.  A left subtree of this path has size at most $\max\{|\alpha_i|,\max_{h<k} |S_i^{(h)}|\}$.  A right subtree of this path up to $v_i$ has size at most $\max\{|\beta_i|,\max_{h> k} |S_i^{(h)}|\}$.  A right subtree of this path below $v_i$ is a subtree of $S_i^{(k)}$ (and hence no bigger than $|S_i^{(k)}|\leq |S_i^{(\ell)}|\leq \max_{h>k} |S_i^{(h)}|$ by assumption).  Since $S_i^{(k)}$ is feasible the invariant holds.

The other subcase, $|S_i^{(\ell)}|\leq |S_i^{(k)}|$ can be handled in a symmetric fashion by extending instead with the rightmost path in $S_i^{(\ell)}$.

\begin{claim} One of the above two cases always applies.  \end{claim}
\begin{proof}
Assume not.  To show that this leads to a contradiction,
we find some subtrees (or collections of subtrees) for which we can lower-bound the size.  This part is significantly more complicated than in Chan's proof because there are now multiple ways in which a subtree might not be feasible, and we must choose our subtrees correspondingly.  Consider Figure~\ref{fig:path1} for an illustration of the following definitions.

\begin{figure}[ht]
\hspace*{\fill}
\includegraphics[page=2]{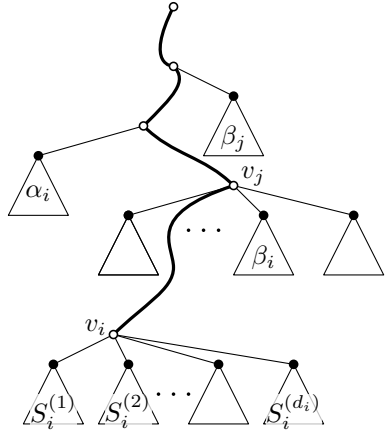}
\hspace*{\fill}
\caption{The situation up to symmetry. } 
\label{fig:path1}
\end{figure}

Suppose that the parent of $\beta_i$'s root, which we denote by $v_j$, is no higher than the parent of $\alpha_i$'s root;  the other case is symmetric.  
We first derive one inequality from $v_j$.
We have $j<i$ by definition of $\beta_i$.  Because we did not terminate the path when extending at $v_j$, Case 2 did not apply at $v_j$.  Therefore the subtree $S_j^{(k)}$ of $v_j$ that contains $v_i$ was the {\em only} feasible subtree at $v_j$.  

Consider Figure~\ref{fig:path2}.
Tree $\beta_i$ is a right subtree at $v_j$, say it was $S_j^{(\ell)}$ with $\ell>k$.  
Let $\mathcal{L}_j$ be the collection of subtrees $S_j^{(1)},\dots,S_j^{(\ell-1)}$.  
We know that $S_j^{(\ell)}$ was infeasible, and study the three possible violations:  \\
(1) $|\alpha_j|^p+|S_j^{(\ell')}|^p > (1-\delta)n^p$ for some $\ell'>\ell$.  But this is impossible since $S_j^{(k)}$ is feasible and $k<\ell$.  \\
(2) $|S_j^{(k')}|^p + |\beta_j|^p>(1-\delta)n^p$ for some $k'<\ell$ (possibly $k=k'$).  In this case, set $B_j:=\beta_j$.  \\
(3) $|S_j^{(k')}|^p+|S_j^{(\ell')}|^p>(1-\delta)n^p$ for some $k'<\ell<\ell'$. In this case, set $B_j:=S_j^{(\ell')}$.  Note that either way $B_j$ is disjoint from $\beta_i=S_j^{(\ell)}$ and a right subtree of one of $v_1,\dots,v_j$, so also disjoint from $\mathcal{L}_j$.  
Subtree $S_j^{(k')}$ that caused the above violation for $S_j^{(\ell)}$ belongs to $\mathcal{L}_j$, and therefore
\begin{equation}
\label{eq:1}
|\mathcal{L}_j|^p + |B_j|^p > (1-\delta)n^p
\end{equation}

\begin{figure}[ht]
\hspace*{\fill}
\includegraphics[page=4]{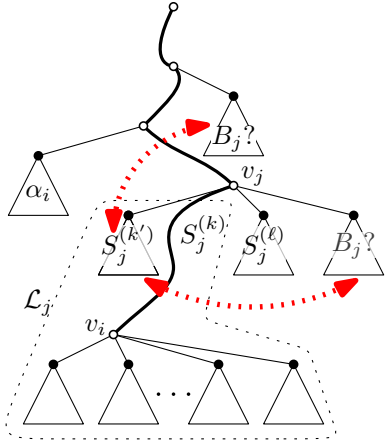}
\hspace*{\fill}
\caption{Close-up on $v_j$.  Red arrows indicates pairs of subtrees that violate feasibility. } 
\label{fig:path2}
\end{figure}

Now we define two subtrees $L_i,R_i$ at $v_i$.
Since the leftmost subtree $S_i^{(1)}$ at $v_i$ is infeasible, but extending into it would not add left subtrees to the path, the infeasibility must be caused by violation (1), i.e., there exists
some $k>1$ such that $|S_i^{(k)}|^p+|\alpha_i|^p > (1-\delta)n^p$.  Set $R_i$ to be this subtree $S_i^{(k)}$, choosing the largest possible index $k$.
We have
\begin{equation}
\label{eq:2}
|R_i|^p + |\alpha_i|^p > (1-\delta)n^p. 
\end{equation}
Symmetrically, since the rightmost subtree $S_i^{(d_i)}$ at $v_i$ is infeasible, there must be
some $h<d_i$ such that $|S_i^{(h)}|^p+|\beta_i|^p > (1-\delta)n^p$.  Set $L_i$ to be this subtree $S_i^{(h)}$, choosing the smallest possible index $h$.  
We have
\begin{equation}
\label{eq:3}
|L_i|^p + |\beta_i|^p > (1-\delta)n^p.
\end{equation}
Note that it is possible $L_i=R_i$ and that both are subsets of $\mathcal{L}_j$.

\medskip\noindent{\bf Case A:} $L_i\neq R_i$.  See Figure~\ref{fig:path3}. In this case
the contradiction is obtained exactly as done by Chan, except by substituting the trees/forests that we have chosen above suitably.
Recall that H\"{o}lder's inequality states that for $p<1$ we have 
$$\sum_a x_a y_a \leq \left(\sum x_a^{1/(1-p)}\right)^{1-p} \left(\sum y_a^{1/p}\right)^p$$  
(in some of our applications below we use $x_a\equiv 1$).
We can derive a contradiction for the value $p=0.48$ (with a sufficiently small $\delta$) by combining Equations (\ref{eq:1}-\ref{eq:3}) as follows:
\begin{eqnarray*}
&& 2.5(1-\delta)n^p  \\
& < & |\alpha_i|^p + |\beta_i|^p + |L_i|^p + |R_i|^p + 0.5 |\mathcal{L}_j|^p + 0.5 |B_j|^p \\
& \leq & |\alpha_i|^p + |\beta_i|^p + 2^{1-p}(|L_i| + |R_i|)^p + 0.5 |\mathcal{L}_j|^p + 0.5 |B_j|^p \\
& \leq & |\alpha_i|^p + |\beta_i|^p + (2^{1-p}+0.5)|\mathcal{L}_j|^p + 0.5 |B_j|^p \\
& \leq & \left(1+1+ (2^{1-p}+0.5)^{1/(1-p)} + 0.5^{1/(1-p)}\right)^{1-p} \\
&& \quad\quad\quad\quad\cdot (|\alpha_i| + |\beta_i| + |\mathcal{L}_j| + |B_j|)^p \\
& < & 2.499n^p
\end{eqnarray*}
since $\alpha_i,\beta_i, \mathcal{L}_i$ and $B_j$ are all disjoint.  
\begin{figure}[ht]
\hspace*{\fill}
\includegraphics[page=3]{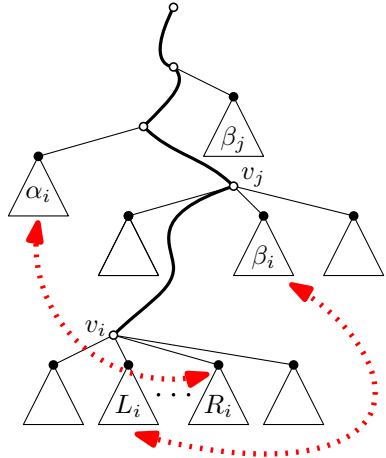}
\hspace*{\fill}
\caption{Close-up on $v_i$, Case (A).}
\label{fig:path3}
\end{figure}

\medskip\noindent{\bf Case B:} $L_i= R_i$.  
In this case we derive one further inequality,
see Figure~\ref{fig:path4}. 
Since $R_i=S_i^{(k)}$ is not feasible, there must exist a violation, and we consider its three possible forms.  (1) $|S_i^{(\ell)}|^p+|\alpha_i|^p > (1-\delta)n^p$ for some $\ell>k$.  But this is impossible since $R_i$ was chosen as the rightmost such violation.  (2) $|S_i^{(h)}|^p + |\beta_i|^p > (1-\delta)n^p$ for some $h<k$.    But this is impossible since $R_i=L_i$ was chosen as the leftmost such violation. 
(3) $|S_i^{(h)}|^p+|S_i^{(\ell)}|^p >(1-\delta)n^p$ for some $h<k<\ell$.  In this case we define $\hat{L}_i:=S_i^{(h)}$ and $\hat{R}_i:= S_i^{(\ell)}$.  Summarizing, $\mathcal{L}_j$ contains the three mutually distinct subtrees $\hat{L}_i,L_i{=}R_i,\hat{R}_i$ and
\begin{equation}
\label{eq:4}
|\hat{L}_i|^p + |\hat{R}_i|^p > (1-\delta)n^p
\end{equation}

\begin{figure}[ht]
\hspace*{\fill}
\includegraphics[page=5]{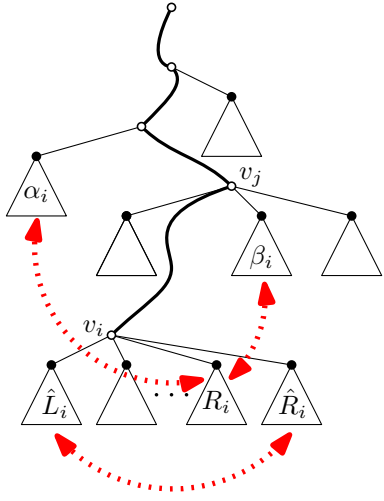}
\hspace*{\fill}
\caption{Close-up on $v_i$, Case (B).}
\label{fig:path4}
\end{figure}

Using again H\"{o}lder's inequality we obtain the desired contradiction:
by combining Equations(\ref{eq:1}-\ref{eq:4}) (and $L_i=R_i$) as follows:
\begin{eqnarray*}
&& 3.5(1-\delta)n^p \\
& < & |\alpha_i|^p + |\beta_i|^p + 2|R_i|^p + |\hat{L}_i|^p + |\hat{R}_i|^p \\
& & \quad\quad + 0.5 |\mathcal{L}_j|^p + 0.5 |B_j|^p \\
& \leq & |\alpha_i|^p +  |\beta_i|^p + (2^{1/(1-p)}+2)^{1-p} (|R_i|+|\hat{L}_i| + |\hat{R}_i|)^p \\
& & \quad\quad + 0.5 |\mathcal{L}_j|^p + 0.5 |B_j|^p \\
& \leq & |\alpha_i|^p + |\beta_i|^p + ((2^{1/(1-p)}+2)^{1-p}+0.5)|\mathcal{L}_j|^p \\
& & \quad\quad + 0.5 |B_j|^p \\
& \leq & (1+1+ ((2^{1/(1-p)}+2)^{1-p}+0.5)^{1/(1-p)} \\
&& \quad\quad+ 0.5^{1/(1-p)} )^{1-p} \cdot \left(|\alpha_i| + |\beta_i| + |\mathcal{L}_j| + |B_j|\right)^p \\
& < & 3.396n^p.
\end{eqnarray*}
So the claim holds.
\end{proof}

So one of Case 1 or Case 2 always applies, and we can continue to expand the path until we terminate it in some application of Case 2.
\end{proof}

\section{Generalized LR-drawings}
\label{sec:generalizedLR}



LR-drawings for binary trees were defined via two particular construction operations.  In contrast, we want to define here {\em generalized LR-drawings} 
via the properties that the drawings must satisfy.  
\todo{TB: I tried to phrase this via guiding principles of generalizing and flexibility, but it came out really awkward. So I decided to simply delete all the stuff about guiding principles instead. jl: looks fine to me, I also gave it a try and couldn't find a good phrasing. The most important part about containing all binary LR is also covered below.}
Let $T$ be an ordered rooted tree, and consider an order-preserving planar drawing $\Gamma$ of $T$.  
We call $\Gamma$ a {\em generalized LR-drawing} 
(or {\em GLR-drawing} for short) 
if it (and all induced drawings of rooted subtrees) satisfies 
the following two conditions:
 \begin{enumerate}[label=(P{{\arabic*}})]
 \item \label{it:vert_path}
 ({\em vertical path}): There exists a root-to-leaf path $\pi$ $\langle v_1,\dots,v_\ell\rangle$ that is drawn vertically aligned, with $v_i$ above $v_{i+1}$ for $1\leq 1<\ell$.  
\item \label{it:path_sep}
({\em path-separation}): The column that contains $\pi$ separates the drawings of the left and right subtrees, i.e., for any left or right subtree $T'$, the drawing $\Gamma'$ of $T'$ induced by $\Gamma$ does not use the column containing $\pi$.
 \end{enumerate}
Clearly any LR-drawing for binary trees satisfies these two conditions, so this is indeed a generalization. 
As one can verify by inspecting the proofs, these are the {\em only} two
conditions needed for the lower-bound arguments in \cite{FPR20} and \cite{CH20}. 
We hence have:
 \begin{cor}[based on \cite{CH20}]\label{co:lower-bound}
 For every positive $n$ there exists an $n$-node ordered rooted tree $T$ such that any GLR-drawing of $T$ has width $\Omega(n^{0.429})$.
 \end{cor}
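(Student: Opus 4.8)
The plan is to obtain Corollary~\ref{co:lower-bound} by re-examining the lower-bound argument of Chan and Huang~\cite{CH20} and observing that it never uses anything about the ambient drawing beyond properties~\ref{it:vert_path} and~\ref{it:path_sep}. In particular, their construction produces, for each $n$, a \emph{binary} ordered rooted tree $T_n$ on $\Theta(n)$ nodes whose every drawing satisfying the two LR-properties has width $\Omega(n^{0.429})$; since a binary tree is in particular an ordered rooted tree, and since our GLR-drawings are required (for $T_n$ and all its induced rooted subtrees) to satisfy exactly \ref{it:vert_path} and \ref{it:path_sep}, any GLR-drawing of $T_n$ is also a valid ``LR-drawing'' in the sense used by~\cite{CH20}, and their lower bound applies verbatim.

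Concretely, I would proceed in three steps. First, recall the structure of the Chan--Huang tree family: it is built recursively, and the width lower bound follows from a recursive inequality of the form $W_{\mathrm{CH}}(n) \ge 1 + \min_\pi(\max_\alpha W_{\mathrm{CH}}(|\alpha|) + \max_\beta W_{\mathrm{CH}}(|\beta|))$ over root-to-leaf paths $\pi$, together with an adversary argument showing that for their trees no path can make both the largest left subtree and the largest right subtree small simultaneously. Second, I would check that each ingredient of that argument uses only: (a) that $\pi$ is drawn in a single column (property~\ref{it:vert_path}), so that its width contributes an additive $1$; and (b) that the left and right subtrees of $\pi$ are drawn in disjoint sets of columns to the two sides of that column (property~\ref{it:path_sep}), so that the widths of the widest left and widest right subtree \emph{add up} rather than overlap. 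Crucially, the argument does \emph{not} need the drawing to be straight-line, strictly upward, bend-free, or confined to a grid in any stronger sense, and it does not need the path $\pi$ to be the ``outermost'' path or to have any other special position. Third, I would conclude that $W_{\mathrm{GLR}}(n) \ge W_{\mathrm{CH}}(n) = \Omega(n^{0.429})$ on the Chan--Huang trees, which is exactly the statement of the corollary.

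The phrase ``as one can verify by inspecting the proofs'' in the paper signals that the intended proof of Corollary~\ref{co:lower-bound} is precisely this inspection, so the only real content is identifying the minimal hypotheses. The main obstacle — and the only point requiring genuine care — is making sure that nothing in~\cite{CH20} secretly exploits a feature of binary LR-drawings that a general GLR-drawing might lack: for instance, whether their argument ever assumes that a node of $\pi$ has \emph{exactly one} subtree hanging off each side (true for binary trees, false in general), or whether it uses the specific ``place the subtree one unit below its parent'' rule from the binary construction. Since properties~\ref{it:vert_path} and~\ref{it:path_sep} are purely about columns and make no reference to arity or vertical placement, the adversary's subtree-size bounds (which are combinatorial facts about $T_n$, independent of the drawing) and the additive/separated column accounting both survive; hence the inspection goes through and the corollary follows.
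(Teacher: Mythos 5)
Your proposal is correct and is essentially the paper's own argument: the paper proves Corollary~\ref{co:lower-bound} precisely by the remark that, upon inspecting the proofs in \cite{CH20} (and \cite{FPR20}), only properties~\ref{it:vert_path} and~\ref{it:path_sep} are used, so the binary lower-bound trees carry over verbatim to GLR-drawings. Your write-up just makes that inspection explicit, which matches the intended proof.
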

 We note that many existing algorithms for ideal drawings of trees (see e.g.~\cite{Bie17-OPTI,Chan02,GR03}) satisfy the condition on drawing some path $\pi$ vertically.  The real restriction on GLR-drawings is that the vertical path separates the left and right subtrees.
The algorithms in [1,2,9] all reuse the column of the vertically-drawn path for some large subtree that has been ``pushed down''.

There are many more properties that  are satisfied by the LR-drawings for binary trees (and so arguably one could have included them in the definition of GLR-drawings, though for maximal flexibility we chose not to do that).  
\todo{FYI: this is what replaced ``guiding principles''}
All LR-drawings of binary trees, and also all drawings that we will create, satisfy the following three properties:
 \begin{enumerate}[label=(P{{\arabic*}})]
 \addtocounter{enumi}{2}
     \item  \label{it:hor_sep}
        ({\em horizontal separation of subtrees}). For any node $v$ of $T$, let $T_v$ be the subtree  rooted at $v$.
There exists a horizontal strip that contains all nodes of $T_v$ and that does not contain any other node of $\Gamma$.
\item  \label{it:group_subtree}
({\em grouping of subtrees at $v_i$}).  For any node $v_i\in \pi$, there exists a horizontal strip that contains $v_i$ and all nodes of all left and right subtrees at 
$v_i$ and that does not contain any other node of~$\Gamma$.  

\todo[inline]{TB: Jayson had convinced me once that P3 implies P4, but now I no longer believe it.  To be on the safe side, I removed that comment. jl: I don't object to leaving that statement out. I think it follows because $v_{i-1}$ is not in the subtree of $v_i$ and thus can't be in the same band. $v_{i+1}$ is not in the subtree rooted at the other children of $v_i$ so we shouldn't be able to go downwards either.  TB: But some subtree at $v_i$ could ``reach down'' with a long edge and be drawn below $v_{i+1}$ and its subtrees.  (It would look roughly like Fig. 10 (right) except that there is $v_{i+1}$ and some subtree in the bottom left quadrant of $R$.)  Then P3 holds but P4 doesn't.   
}
\item  \label{it:group_left}
({\em grouping of left/right subtrees at $v_i$}).  For any node $v_i\in \pi$, there exists 
a horizontal strip that contains all nodes of all left subtrees at $v_i$ and does not contain any other nodes of $\Gamma$.  
There also exists a horizontal strip that contains all nodes of all right subtrees at $v_i$ and does not contain any other nodes of $\Gamma$.  
 \end{enumerate}
 


 

Finally, there are three more properties that the LR-drawings of binary trees satisfy, but some of our constructions do not (and as we will argue, we cannot hope to satisfy them and have sub-quadratic area).
 \begin{enumerate}[label=(P{{\arabic*}})]
\addtocounter{enumi}{5}
     \item \label{it:straight} The drawing is straight-line.
     \item \label{it:upward} The drawing is strictly upward.
     \item \label{it:min_distance}
     ({\em minimum-distance}) Bounding boxes of subtree-drawings have the minimum possible distance to the path and to each other.  

Formally, we use $B(T')$ to denote the bounding box of the drawing of rooted subtree $T'$.  We require that if
$T'$ is a left subtree of $\pi$, then the right side of $B(T')$ lies one unit left of $\pi$, and symmetrically for right subtrees.    We also require minimal vertical distances between bounding boxes, most easily expressed by demanding that every row contains a node.
 \end{enumerate}
 \todo[inline]{TB: It occurred to me that we should split P8 into two conditions, one about distances in $x$-direction and one about distances in $y$-direction (call the latter P9).  \\
 For example, the 1-bend construction satisfies the former, but not P9. 
 Also, any drawing that satisfies the P8 has width $O(n^{0.48})$ if we use $\pi$ from Lemma 2, and any drawing that satisfies P9 has height $n$.  It might be worth stating that.
 \\
 I don't want to do such a major change for the CCCG submission this late in the game, but let's do this for the revision (assuming it gets accepted).}

We first show that all eight conditions given above can be satisfied simultaneously if we allow for quadratic area.  The construction is the ``standard
construction''~\cite{Chan20} and hence nearly trivial; we repeat
the details for completeness' sake.

\begin{lemma}
\label{lem:quadratic}
Any $n$-node ordered rooted tree has a GLR-drawing that additionally 
satisfies conditions \ref{it:hor_sep}-\ref{it:min_distance} and has area $O(n^2)$.  Furthermore, the
root is placed in the top-left corner of the bounding box.
\end{lemma}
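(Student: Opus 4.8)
The plan is to prove the statement by induction on $n$, using the \emph{standard construction}. Take $\pi=\langle v_1,\dots,v_\ell\rangle$ to be the \emph{leftmost} root-to-leaf path of $T$ (repeatedly descend to the leftmost child until reaching a leaf); then $\pi$ has no left subtrees, and every subtree hanging off $\pi$ is a right subtree. Draw $\pi$ vertically in column $1$ with $v_1$ topmost. For each $i<\ell$ let $c_1=v_{i+1},c_2,\dots,c_{d_i}$ be the children of $v_i$ in left-to-right order, so that $T_{c_2},\dots,T_{c_{d_i}}$ are exactly the right subtrees at $v_i$; draw each $T_{c_k}$ recursively, and stack these drawings in consecutive blocks of rows strictly between the row of $v_i$ and the row of $v_{i+1}$ (lengthening the edge $v_iv_{i+1}$ as needed), each placed so that its left side is in column $2$, and with the top-to-bottom order of the blocks chosen to match the given rotation at $v_i$. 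Pack all blocks and path-rows tightly so that no row is empty. The base case $n=1$ is a single point with the node at its (degenerate) top-left corner.

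For the inductive step I would first check that this is an order-preserving planar straight-line strictly upward drawing. By the (strengthened) inductive hypothesis, each $T_{c_k}$ is drawn inside its bounding box $B_k$ with $c_k$ at the top-left corner of $B_k$; the boxes $B_k$ lie in pairwise disjoint blocks of rows and all in columns $\ge 2$, whereas $\pi$ uses only column $1$. Hence the straight segment from $v_i$ to $c_k$ meets column $2$ only at the corner of $B_k$, meets no other box, and meets column $1$ only at $v_i$, so it introduces no crossing; the segments from $v_i$ to its various children pairwise share only $v_i$, and internal crossings are excluded by induction. Strict upwardness holds because the roots $c_k$ and the rest of $\pi$ lie in rows strictly below $v_i$, and straight-lineness is by construction. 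Order-preservation at $v_i$ holds because the parent edge leaves $v_i$ straight up, the edge to $v_{i+1}$ leaves straight down, and the edges to $c_2,\dots,c_{d_i}$ sweep the lower-right quadrant in the order determined by the block order, which we chose to match the embedding. Finally, \ref{it:vert_path} and \ref{it:path_sep} are immediate ($\pi$ is vertical, and column $1$ separates the empty family of left subtrees from the right subtrees in columns $\ge 2$), and the induced drawing of any rooted subtree is again of exactly this form (for $v_i\in\pi$ it is the standard construction applied to $T_{v_i}$; for $v\notin\pi$ it sits inside some $B_k$ and is a GLR-drawing by induction), so \ref{it:vert_path} and \ref{it:path_sep} hold for all induced subtree drawings as required.

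Writing $\rho_1<\dots<\rho_\ell$ for the rows of $v_1,\dots,v_\ell$ (rows increasing downward), the remaining properties follow from this row structure. The subtree $T_{v_i}$ occupies exactly the rows $\rho_i,\rho_i+1,\dots$ down to the bottom row, with no other node in that strip, which gives \ref{it:hor_sep} for $v_i\in\pi$ and, together with the inductive hypothesis applied inside the relevant $B_k$, for all nodes; the rows $\rho_i,\dots,\rho_{i+1}-1$ contain $v_i$ and exactly the right subtrees at $v_i$ (and no left subtrees), giving \ref{it:group_subtree}, and their sub-strip $\rho_i+1,\dots,\rho_{i+1}-1$ gives \ref{it:group_left} (the left-subtree strip being empty). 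Properties \ref{it:straight} and \ref{it:upward} were noted above, and \ref{it:min_distance} holds because each $B_k$ has its left side exactly one unit right of $\pi$ and the tight packing makes every row nonempty. The root $v_1$ lies in column $1$, which is the leftmost column used since everything else is in columns $\ge 2$, and in the topmost row since all other nodes are its descendants; hence $v_1$ is at the top-left corner. For the area, an easy induction over this recursion gives width $W(n)\le 1+\max_{T'}W(|T'|)\le 1+(n-1)=n$, where $T'$ ranges over the right subtrees of $\pi$ (each of size at most $n-1$), and height $H(n)=\ell+\sum_{T'}H(|T'|)\le \ell+(n-\ell)=n$; thus the area is $O(n^2)$.

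The only genuinely delicate point is order-preservation together with planarity of the connector edges from $v_i$ to the roots of its right subtrees: one must pick the top-to-bottom order of the stacked blocks so that the cyclic order of edges around $v_i$ agrees with the given embedding, and simultaneously verify that these straight connectors — which all emanate from $v_i$ and touch column $2$ only at their endpoints — neither cross one another nor cross any subtree box. Everything else is routine bookkeeping about which rows and columns are occupied.
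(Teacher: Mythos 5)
Your proposal is correct and is essentially the paper's own proof: both use the ``standard construction'' with $\pi$ chosen as the leftmost path, the right subtrees stacked in unit-separated row blocks one column to the right of the path, and each subtree drawn recursively with its root in the top-left corner, yielding width and height at most $n$. The paper merely phrases the recursion at the root (placing the leftmost child's subtree last, with its root aligned below the root of $T$) and asserts the verification of the properties, whereas you unfold the same construction along the whole path and check the conditions explicitly.
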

\begin{proof}
If $T$ consists of a single node, draw such node as an arbitrary point in the plane. Otherwise let $R^{(1)},\dots,R^{(d)}$ be the subtrees 
rooted at the children of the root $v_T$ of $T$, enumerated from \emph{right to left}. 
Recursively compute a drawing for each such subtree and combine them as follows. The drawing of $R^{(1)}$ is placed such that the top side of its bounding box $B(R^{(1)})$ is one unit below $v_T$, while its left side is one unit to the right of $v_T$. The drawing of $R^{(i)}$, $1 < i < d$, is placed such that the top side of $B(R^{(i)})$ is one unit below the bottom side of $B(R^{(i-1)})$, while its left side is again one unit to the right of $v_T$. The drawing of $R^{(d)}$ is drawn such that its root is vertically aligned with $v_T$ and the top side of $B(R^{(d)})$ is one unit below the bottom side of $B(R^{(d-1)})$. (This corresponds to choosing $\pi$ as the leftmost path of $T$.) It is easy to verify that each edge can be drawn as a straight-line segment, and that the resulting drawing is a strictly-upward GLR-drawing 
that satisfies all conditions.
Moreover, the construction guarantees that both the width and the height are at most $n$.  
\end{proof}
 \begin{figure}[ht]
\centering
\includegraphics[page=1]{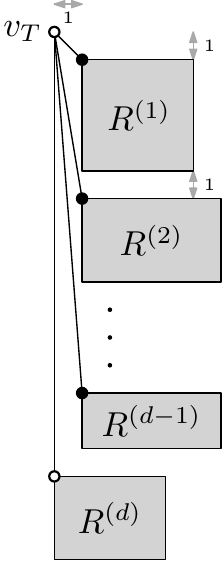}
\caption{Construction an ideal GLR-drawing of a $d$-ary tree in $O(n^2)$ area.\label{fig:sl}
}
\end{figure}

Unfortunately, it turns out that we cannot hope for smaller area if
we want to satisfy all conditions.

\begin{lemma}
\label{lem:allConditions}
For every positive integer $k$ there exists an  ordered rooted tree $T$ with $n= 6k -1$ nodes and arity 4 such that any GLR-drawing of $T$ that also satisfies conditions
\ref{it:hor_sep}-\ref{it:min_distance}   for all subtrees has area $\Omega(n^2)$.
\end{lemma}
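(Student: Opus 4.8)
The plan is to construct an explicit family of trees that forces any GLR-drawing satisfying all of \ref{it:hor_sep}--\ref{it:min_distance} to have both width and height linear in $n$. The intuition is that condition \ref{it:min_distance} is extremely rigid: once the vertical path $\pi$ is chosen, the bounding box of every left subtree must touch the column just left of $\pi$, the bounding box of every right subtree must touch the column just right of $\pi$, and every row must contain a node (so there is no vertical slack). I would design a tree that, no matter which root-to-leaf path the drawer picks as $\pi$, always has a ``heavy'' subtree hanging off $\pi$ that itself must be drawn as a tall narrow strip, while simultaneously the path $\pi$ is forced to be long. So the first step is to exhibit such a tree on $n = 6k-1$ nodes with arity $4$.

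The construction I have in mind is a caterpillar-like spine of length $\Theta(k)$, where at each spine node we attach a few small gadgets designed so that (a) the spine is essentially the only candidate for a long path, forcing height $\Omega(k)$ regardless of where $\pi$ runs, and (b) some subtree off $\pi$ is large enough that by an inductive/recursive argument it occupies width $\Omega(k)$. Arity $4$ gives enough room: at each of the $\Theta(k)$ spine nodes we can place (besides the two spine edges) two pendant subtrees, one to be a ``left blocker'' and one a ``right blocker,'' each of constant size, plus one larger recursively-defined subtree. The count $n = 6k-1$ strongly suggests the spine has $k$ (or $2k$) nodes and each spine node carries a constant number ($\approx 3$ or $5$) of extra nodes, with the $-1$ coming from the leaf at the bottom of the spine. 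I would pin down the exact gadget so the arithmetic gives precisely $6k-1$.

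The core argument has two halves. For the \textbf{height} lower bound: by \ref{it:hor_sep} (horizontal separation of subtrees) and \ref{it:min_distance} (every row contains a node), the rows used by disjoint subtrees are disjoint and contiguous, so the height is at least the sum of heights of a chain of nested subtrees along $\pi$; by the design of the gadgets, no matter which path is chosen, one walks past $\Omega(k)$ spine-like nodes each contributing a subtree in its own band, forcing height $\Omega(k) = \Omega(n)$. For the \textbf{width} lower bound: I would argue that for every choice of $\pi$, there is a subtree $T'$ hanging off $\pi$ (left or right) that contains $\Omega(k)$ of the spine nodes' gadgets stacked in their own rows; recursing on $T'$ (its own GLR-drawing must again satisfy all conditions), and using \ref{it:path_sep} plus \ref{it:min_distance} to see that $T'$'s column for its own path cannot be shared, one shows $T'$ itself needs width $\Omega(k)$, and since $T'$ sits entirely on one side of $\pi$'s column, the total width is $\Omega(k) = \Omega(n)$. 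Multiplying the two bounds gives area $\Omega(n^2)$.

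The main obstacle is the width lower bound: showing that \emph{some} heavy off-path subtree is unavoidable for \emph{every} path choice, and that this heavy subtree genuinely forces linear width rather than being drawable cleverly in a short-and-wide fashion. This requires the gadget to be ``self-similar'' enough that the same obstruction reappears inside the recursion — essentially I need a tree whose every root-to-leaf path leaves behind a scaled copy of (a linear-size portion of) the whole tree on one side, so that the recursion depth, and hence the forced width, is $\Omega(k)$. Getting the gadget to simultaneously (i) force a long path, (ii) have arity exactly $4$, (iii) leave a large recursive remainder on one fixed side for every path, and (iv) hit the node count $6k-1$ exactly, is the delicate combinatorial design work; once the gadget is right, the two lower-bound arguments are routine consequences of \ref{it:hor_sep} and \ref{it:min_distance}.
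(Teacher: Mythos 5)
Your high-level plan points in the right direction --- a tree in which every choice of vertical path leaves a nested chain of ``heavy'' subtrees on one side, so that each level of the recursion costs one extra column --- but the proposal stops exactly where the actual proof begins, and the step you defer as ``delicate combinatorial design work'' is the entire content of the lemma. The difficulty is not finding a heavy off-path subtree at the top level (a root with two symmetric heavy children guarantees that for free); it is ruling out that, \emph{inside} the induced drawing of that heavy subtree $T'$, the drawer simply routes $T'$'s own vertical path through $T'$'s heavy child. If that is allowed at every level, all off-path subtrees are constant-size and conditions \ref{it:vert_path}--\ref{it:min_distance} are satisfied by a drawing of constant width, so ``recursion depth $=$ forced width'' is false as stated. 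Your appeal to \ref{it:path_sep} and \ref{it:min_distance} alone does not close this: path-separation only separates left from right subtrees of the \emph{chosen} path, and it places no constraint on which path is chosen at each recursive level.

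The paper's proof supplies precisely the missing mechanism. Its tree is not a caterpillar spine but a root with children $\ell,v_1,v_1',\ell'$ (two flanking leaves and two symmetric height-$k$ ``staircase generators,'' each internal node having children leaf/next/leaf); the node count $1+2+2(3k-2)=6k-1$ matches. Whichever way $\pi$ goes at the root, one generator $T_{v_1}$ is (say) a left subtree together with the leaf $\ell$ above it. Then \ref{it:min_distance} pins the right side of $B(T_{v_1})$ to the column adjacent to $\pi$, \ref{it:group_left} and \ref{it:min_distance} pin its top row to just below $\ell$, \ref{it:upward} puts $v_1$ in that top row, and \ref{it:straight} plus order-preservation around $\ell$ force $v_1$ into the top-\emph{right} corner of $B(T_{v_1})$. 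That corner position is what removes the drawer's freedom: the vertical path of $T_{v_1}$'s induced drawing can have no right subtrees (they would lie right of $v_1$'s column, contradicting the corner), so it must descend to the rightmost child, a leaf, leaving $T_{v_2}$ off-path on the left --- one column further left --- and the argument repeats $k$ times, giving width $\Omega(k)$, while strict upwardness gives height $\Omega(k)$. To repair your proposal you would need to reproduce this chain of deductions (or an equivalent corner-forcing argument); note also that it consumes \ref{it:straight}, \ref{it:upward} and \ref{it:group_left}, none of which appear in your width argument, and that the flanking leaves $\ell,\ell'$ are not optional padding but the device that forces the corner.
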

\begin{proof}
Tree $T$ consists of root $r$ with four children $\ell,v_1,v_1',\ell'$,
where $\ell,\ell'$ are leaves while $v_1$ and $v_1'$ are each roots of trees
of height $k$.
Specifically, for $1\leq i<k$, nodes $v_i$ and $v_i'$ each have three children: one leaf, node $v_{i+1}$ resp.~$v_{i+1}'$, and another leaf.  See Figure~\ref{fig:quadratic} for the case when $k=4$.

\begin{figure}[ht]
\hspace*{\fill}
\includegraphics[page=1]{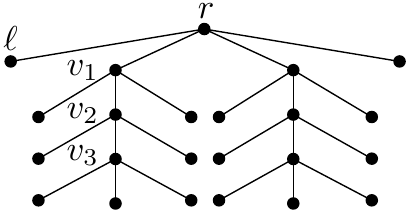}
\hspace*{\fill}
\includegraphics[page=2]{quadratic.pdf}
\hspace*{\fill}
\caption{Construction of a family of trees that require quadratic area in any GLR-drawing that satisfies all conditions \ref{it:hor_sep}-\ref{it:min_distance}.}
\label{fig:quadratic}
\end{figure}

Fix an arbitrary GLR-drawing of $T$ that satisfies criteria 
\ref{it:hor_sep}-\ref{it:min_distance}.  
Up to symmetry, we may assume that the vertically drawn path $\pi$ uses $v_1'$ or $\ell'$, so $\ell$ and $T_{v_1}$ are both left subtrees.  By the minimum-distance condition, 
and since $T_\ell$ only consists of $\ell$,  the location of $\ell$ is one unit left of path $\pi$.  (It may be one or two units below $r$, depending on the location of $\ell'$, but our proof will not use this.)
Now consider the bounding box $B(T_{v_1})$ of the drawing of $T_{v_1}$.
Again by the minimum-distance condition, its right side must be one unit left of path $\pi$.  Since left subtrees at root $r$ are grouped, distances are minimum and the drawing is ordered, the top side of $B(T_{v_1})$ must be one unit below $\ell$.  Since the drawing is strictly-upward, node $v_1$ is in the top row of $B(T_{v_1})$.  Since $(r,v_1)$ is drawn with a straight-line segment, this requires $v_1$ to be on the top right corner of $B(T_{v_1})$, otherwise the drawing would not be order-preserving at $r$ or $(r,v_1)$ would overlap $\ell$.

So we now know that $T_{v_1}$ is drawn with its root in the top right corner.  
It follows that the path $\pi'$ used for drawing $T_{v_1}$ must use the
right child of $v_1$, i.e., goes to a leaf while $T_{v_2}$ is a left subtree
of $\pi'$.  But notice that now the situation is repeated: at $v_1$, there exists a leaf, then the subtree $T_{v_2}$, and both are left subtrees of the vertically drawn path $\pi'$.  As above one argues that hence $v_2$ is drawn on the top right corner of the bounding box $B(T_{v_2})$ of $T_{v_2}$, and one unit left of $v_1$.  Repeating the argument, each $v_i$ is drawn one unit further left that $v_{i-1}$.  Therefore $B(T_{v_1})$ has width at least $k\in \Omega(n)$.  We also know that $B(T_{v_1})$ has height at least $k$ since its tree has height $k$ and is drawn strictly upward.  So the area is $\Omega(n^2)$.
\end{proof}

So in our constructions, we relax some of the conditions \ref{it:straight}-\ref{it:min_distance},
and show that then we can achieve subquadratic-area GLR-drawings.

\subsection{Upward 1-bend GLR-drawings}
\label{sec:1bend}

Let $T$ be an ordered rooted tree, and let $\pi=\langle v_1,v_2,\dots,v_\ell \rangle$ be a root-to-leaf path of $T$. We give a simple recursive construction to compute a strictly-upward generalized LR-drawing of $T$ by using at most one bend per edge. Refer to Figure~\ref{fig:1bend} for an illustration. 

\begin{figure}[ht]
\centering
\includegraphics[page=2]{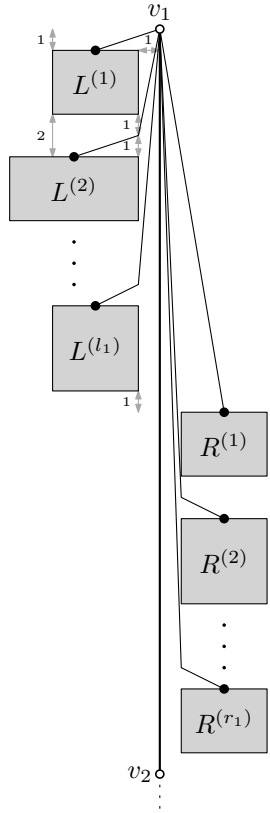}
\caption{Construction for 1-bend generalized LR-drawings. 
}
\label{fig:1bend}
\end{figure}

Assume $v_1$ is placed at an arbitrary point of the plane. For ease of notation, let $L^{(1)},\dots,L^{(l_1)}$ be the left  subtrees rooted at $v_1$, enumerated from left to right. Recursively compute a drawing for each $L^{(i)}$, $1 \le i \le l_1$, and denote by $B(L^{(i)})$ the corresponding bounding box. Place the drawing of $L^{(1)}$ such that the top side of $B(L^{(1)})$ is one unit below $v_1$ and so that the right side of $B(L^{(1)})$ is one unit to the left of $v_1$. Similarly, for each $1 < i \le l_1$, place the drawing of $L^{(i)}$ such that that the top side of $B(L^{(i)})$ is two units below the bottom side of $B(L^{(i-1)})$ and so that the right side of $B(L^{(i)})$ is one unit to the left of $v_1$. Let $R^{(1)},\dots,R^{(r_1)}$ be the right subtrees at $v_1$, enumerated from {\em right to left}. 
Apply a symmetric construction as for the left subtrees and move them down such that the top side of $B(R^{(1)})$ is placed one unit below the bottom side of $B(L^{(l_1)})$. Now place $v_2$ vertically aligned with $v_1$ and one unit below the bottom side of $B(R^{(r_1)})$. Concerning the edges, observe that the edge connecting $v_1$ to the root of $L^{(1)}$ can be drawn with a straight-line segment without crossings, whereas the other edges that connect $v_1$ to the root of each subtree $L^{(i)}$, with $i>1$, can instead be drawn with precisely one bend placed one unit above the top side of $B(L^{(i)})$ and one unit to the left of $v_1$. The edges that connect $v_1$ to the roots of the subtrees $R^{(i)}$ are drawn symmetrically. 
By repeating the construction for each $v_i$, with $1 < i \le \ell$, we conclude the drawing. 

Every row contains a node or a bend, so the height is $O(n)$.
The width
obeys Eq.~\ref{eq:main}. When path $\pi$ is chosen as prescribed by Lemma~\ref{lem:path}, Chan \cite{Chan02} proved that Eq.~\ref{eq:main} solves to $O(n^{0.48})$. The next lemma follows.

\begin{lemma}\label{le:1bend}
Any ordered rooted tree of size $n$ admits a strictly-upward GLR-drawing with at most one bend per edge, whose width is $O(n^{0.48})$ and whose height is $O(n)$.
\end{lemma}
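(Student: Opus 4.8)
The plan is to present the recursive construction explicitly and then verify the three required bounds: planarity with strict-upwardness, the width recursion, and the linear height. The construction has essentially already been described in the paragraph preceding the lemma statement, so the proof is mostly a matter of collecting those observations into a clean induction and invoking Lemma~\ref{lem:path} and Chan's solution of Equation~\ref{eq:main}.

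\medskip
First I would fix notation: given $T$ with root $r$, apply Lemma~\ref{lem:path} to obtain a root-to-leaf path $\pi=\langle v_1,\dots,v_\ell\rangle$ such that every left subtree $\alpha$ and every right subtree $\beta$ satisfies $|\alpha|^p+|\beta|^p\le(1-\delta)n^p$ with $p=0.48$. Then I would describe the drawing procedure by induction on $n$, with the inductive hypothesis asserting that any $m$-node ordered rooted tree has a strictly-upward GLR-drawing with at most one bend per edge in which (i) the root is placed in the top row of the bounding box, on the column of its chosen vertical path, and (ii) every row contains a node or a bend. The base case $n=1$ is trivial. For the inductive step, process $v_1,v_2,\dots,v_\ell$ from top to bottom: at each $v_i$, place the left subtrees $L^{(1)},\dots,L^{(l_i)}$ (left to right) and then the right subtrees $R^{(1)},\dots,R^{(r_i)}$ (right to left) in stacked horizontal strips just below $v_i$, each with its bounding box offset one unit horizontally from the column of $\pi$ (left subtrees to the left, right to the right), inserting one blank row between consecutive bounding boxes to route bends, and finally place $v_{i+1}$ on the column of $\pi$ one unit below the lowest strip; recurse on each subtree drawing.

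\medskip
Next I would verify correctness. For \emph{planarity and order-preservation at $v_i$}: the topmost left subtree $L^{(1)}$ is connected to $v_i$ by a straight segment going down-left from $v_i$ into the top-right corner of $B(L^{(1)})$, which by hypothesis is where the root of $L^{(1)}$ sits one unit below and one unit left of $v_i$; each further $L^{(j)}$ ($j>1$) is reached by an edge with a single bend at the point one unit above $B(L^{(j)})$ and one unit left of $v_i$, so the edge travels straight down the column immediately left of $\pi$ and then one unit left-and-down to the root of $L^{(j)}$ --- these edge-portions are pairwise disjoint, stay left of $\pi$, and the left-to-right order of the $L^{(j)}$ matches the vertical order of their strips; the right subtrees are handled symmetrically, and since each subtree-drawing lives in its own horizontal strip strictly below $v_i$ and strictly above $v_{i+1}$, there is no interference between different $v_i$. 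Strict-upwardness holds because every node of a subtree at $v_i$ lies strictly below $v_i$ and strictly above $v_{i+1}$ and its descendants, and within a subtree it holds by induction. Conditions \ref{it:vert_path} and \ref{it:path_sep} hold by construction: $\pi$ is drawn vertically aligned, and every left/right subtree is drawn strictly to one side of $\pi$'s column. For the \emph{height}: every row of the final drawing contains either a node of $\pi$, or a node/bend inside some subtree strip (by the inductive ``every row has a node'' property together with the fact that the one spacer row between consecutive strips carries a bend of an edge of $\pi$'s node to the lower strip), so the height is at most the number of nodes plus bends, which is $O(n)$. For the \emph{width}: if $W(m)$ denotes the maximum width over $m$-node trees produced by this construction, then the column of $\pi$ plus the widest left subtree stack and widest right subtree stack give exactly Equation~\ref{eq:main}, $W(n)\le 1+\min_\pi(\max_\alpha W(|\alpha|)+\max_\beta W(|\beta|))$; feeding in the path of Lemma~\ref{lem:path} and the inequality $|\alpha|^p+|\beta|^p\le(1-\delta)n^p$, Chan's inductive argument \cite{Chan02} gives $W(n)=O(n^{0.48})$.

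\medskip
I expect the only genuinely delicate point to be the bookkeeping in the planarity/order argument at a single $v_i$ --- specifically, checking that the one-bend edges from $v_i$ to the various $L^{(j)}$ (and $R^{(j)}$) do not cross each other or the subtree drawings, which relies on all these edges sharing the single column one unit to the left (resp.\ right) of $\pi$ for their vertical portions and turning off it only in the spacer row directly above the target bounding box. Everything else (the height count, and the reduction of the width to Equation~\ref{eq:main}) is routine once the construction is pinned down, and the $O(n^{0.48})$ bound is quoted directly from Chan together with Lemma~\ref{lem:path}.
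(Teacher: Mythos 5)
Your proposal is correct and follows essentially the same construction and analysis as the paper: stack the recursively drawn left and right subtrees beside the vertical path with a spacer row between consecutive boxes, route each non-first child edge through a single bend one unit above its box and one unit off the path's column, count a node or bend in every row to get height $O(n)$, and combine Lemma~\ref{lem:path} with Chan's solution of Equation~\ref{eq:main} for the width. The only slip is cosmetic: the segment from $v_i$ to a bend point cannot lie entirely on the column immediately left of $\pi$ (it is a near-vertical diagonal starting on $\pi$'s column), but since the bend points are exactly where the paper places them this does not affect planarity.
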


Note that the above GLR-drawings 
can be vertically  stretched so to become straight-line and hence an ideal GLR-drawing. Namely, for each edge $(u,v)$ drawn with one bend, it suffices to insert sufficiently many rows above the bend point so to guarantee a direct line of sight between $u$ and $v$. This is always possible because each bend point is such that no other node or bend is placed with the same $y$-coordinate. While the above transformation does not change the width of the drawing, it may produce a height that is not polynomial in $n$. 
Also, it does not satisfy \ref{it:min_distance}.  (The drawing of Lemma~\ref{le:1bend} does not satisfy \ref{it:min_distance} either because of the rows for the bends, but at  least it comes close.)

\subsection{Non-upward straight-line LR-drawings}
\label{sec:NonUpward}

In this section, we show how we can avoid using bends.  Thus we create a GLR-drawing that is
straight-line, and in fact satisfies all 
of \ref{it:hor_sep}-\ref{it:min_distance} except that it is not upward.
The crucial idea is to give two drawing-algorithms to create 
different types of GLR-drawings.
\begin{itemize}
\item In a type-I drawing, the root is located in the top row (with no
	restriction on the column).
\item In a type-II drawing, the root is located in the leftmost or
	rightmost column, with no  node above it in the same column. 
	We will
	use type-II$\ell$ and type-II$r$ to specify whether the root
	is left or right.
\end{itemize}

\begin{lemma}
Let $p=0.48$.  Given any ordered rooted tree $T$ of size $n$, there exist 
\begin{itemize}
\item a straight-line GLR-drawing of type I that has width at most $cn^p-1$ (for some constant $c>0$),
\item a straight-line GLR-drawing of type II$\ell$ that has width at most $cn^p$ (for the same constant $c$), and
\item a straight-line GLR-drawing of type II$r$ that has width at most $cn^p$ (for the same constant $c$).
\end{itemize}
Furthermore, all drawings have height $n$.
\end{lemma}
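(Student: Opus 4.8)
The plan is to prove the three claims simultaneously by strong induction on $n$, using the path $\pi$ guaranteed by Lemma~\ref{lem:path} as the vertically-drawn path. For the base case $n=1$ all three drawings are a single point, which trivially satisfies every bound (choosing $c$ large enough that $c\cdot 1^p - 1 \geq 1$, i.e. $c\geq 2$). For the inductive step, fix $\pi = \langle v_1,\dots,v_\ell\rangle$ from Lemma~\ref{lem:path}; every left subtree $\alpha$ and right subtree $\beta$ of $\pi$ satisfies $|\alpha|^p + |\beta|^p \leq (1-\delta)n^p$. We draw $\pi$ vertically, and we recursively draw each left/right subtree at each $v_i$ using one of the three drawing types, then attach it; the free parameter is which type to use for each subtree and which column it lands in.

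The main idea, as the excerpt hints, is that type-II drawings are what let us avoid bends: if a left subtree $T'$ at $v_i$ is drawn of type-II$r$, then its root sits in the rightmost column of $B(T')$ with nothing above it, so placing that column one unit left of $\pi$ and the root in the row just below $v_i$ lets the edge $(v_i,\text{root}(T'))$ be a straight (in fact, a short slanted or horizontal-ish) segment with clear line of sight; symmetrically a right subtree should be drawn of type-II$\ell$. So the construction is: enumerate the left subtrees at $v_i$ from left to right as $L^{(1)},\dots,L^{(l_i)}$; draw each of type-II$r$; stack them in consecutive bands just below $v_i$ (each band one row below the previous, using the height-$=n$ property of the recursive drawings so that rows are used efficiently and the total height telescopes to $n$), each with its rightmost column one unit left of $\pi$; then symmetrically the right subtrees $R^{(j)}$ of type-II$\ell$ stacked below, rightmost-to-leftmost, each with leftmost column one unit right of $\pi$; then $v_{i+1}$ directly below. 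The height bookkeeping gives exactly $n$ because $\pi$ contributes one row per node and each subtree contributes its own height with no wasted rows (this is the point of insisting height $=n$ rather than $O(n)$ in all three cases). For the overall drawing of $T$ itself: if we want type-I, note the root $v_1$ can be placed in the top row with the leftmost left subtree to its lower-left — but we must be careful that $v_1$ is genuinely in the top row, which forces the leftmost subtree at $v_1$ to start one row below; similarly type-II$\ell$ is obtained by choosing $\pi$'s first edge appropriately or by a small reflection/placement trick so that $v_1$ ends up in the leftmost column.

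For the width bound, the drawing occupies the column of $\pi$, plus (to the left) the maximum width over all left subtrees, plus (to the right) the maximum width over all right subtrees. A left subtree drawn of type-II$r$ has width at most $c|\alpha|^p$ by induction; a right subtree of type-II$\ell$ has width at most $c|\beta|^p$. Hence the width of a type-I-style layout is at most $1 + \max_\alpha c|\alpha|^p + \max_\beta c|\beta|^p \leq 1 + c(1-\delta)n^p$. To close the induction for type I we need $1 + c(1-\delta)n^p \leq cn^p - 1$, i.e. $c\delta n^p \geq 2$, which holds for all $n\geq 2$ provided $c \geq 2/(\delta 2^p)$ — so we just pick $c$ large enough (also respecting the base case). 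For type~II$\ell$ and II$r$ the target is $cn^p$ (one unit more of slack), and the same estimate $1 + c(1-\delta)n^p \leq cn^p$ needs $c\delta n^p \geq 1$, which is even easier; the extra unit of width budget in type~II is precisely what pays for placing the root in the boundary column rather than somewhere interior.

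The step I expect to be the main obstacle is the type-II construction and its interaction with the no-bend requirement: we must verify that when a type-II$r$ left subtree is attached with its rightmost column one unit left of $\pi$ and its root one row below $v_i$, the edge $v_i \to \text{root}$ really is realizable as a straight segment without crossing either $\pi$, the other subtrees, or the subtree's own drawing — this uses the "no node above the root in the root's column" clause of type~II crucially, and also uses that the subtree lives strictly to the left of $\pi$'s column (condition \ref{it:path_sep}) so the segment only grazes the boundary. A secondary subtlety is making the root of $T$ land in the correct special position (top row for type~I, boundary column for type~II) while still being able to use type-II drawings for all of $\pi$'s subtrees; this may require treating $v_1$ slightly specially (e.g.\ for type~II$\ell$, recursing on $T$ with $\pi$'s path chosen so that all of $v_1$'s subtrees are right subtrees, so that nothing is placed to the left of the column of $\pi$, putting $v_1$ in the leftmost column automatically). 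Finally, one must double-check that the height stays exactly $n$ under the stacking rule — each row of the final drawing contains exactly one "new" node (a path node, or a node of exactly one recursively-placed subtree), which follows from the height-$=n$ inductive hypothesis and the one-row offsets, and from the fact that $|\pi| + \sum(\text{sizes of all subtrees}) = n$.
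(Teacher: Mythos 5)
Your type-I construction and the width recurrence match the paper's: draw $\pi$ from Lemma~\ref{lem:path} vertically, recurse with type-II$r$ on left subtrees and type-II$\ell$ on right subtrees so that every attached root sits in the column adjacent to $\pi$ with nothing above it, and close the induction via $1+c|\alpha|^p+c|\beta|^p\leq 1+c(1-\delta)n^p\leq cn^p-1$ for $c$ large. The gap is exactly where you flag it: the type-II construction. Your proposed fix --- ``choose $\pi$'s first edge appropriately'' or recurse ``with $\pi$'s path chosen so that all of $v_1$'s subtrees are right subtrees'' --- does not work. First, $\pi$ is not free: it is the specific path given by Lemma~\ref{lem:path}, and replacing it by (say) the leftmost path destroys the bound $|\alpha|^p+|\beta|^p\leq(1-\delta)n^p$ on which the whole width analysis rests. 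Second, even if $v_1$ happened to have no left subtrees, any later $v_i$ on $\pi$ with a left subtree forces that subtree into a column left of $\pi$, so the root still fails to be in the leftmost column; you would need \emph{every} path vertex to be free of left subtrees, which you cannot arrange.

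The paper's resolution is a genuine additional idea. Let $k$ be the first index such that $v_k$ has a left subtree ($v_1,\dots,v_{k-1}$ then automatically sit in the leftmost column). At $v_k$ the vertical line is rerouted into the leftmost left subtree $L^{(1)}$: the remainder of $\pi$, i.e.\ the subtree $S(\pi)$ rooted at $v_{k+1}$, is drawn as a type-I drawing and placed one unit to the \emph{right} of $v_k$'s column, while the right subtrees of $v_k$ are stacked \emph{above} $v_k$ --- this is precisely why these drawings are not upward, and your sketch (which stacks everything below, as in type I) cannot reproduce it. The width analysis then needs two cases your proposal does not cover: $S(\pi)$ is not a left or right subtree of $\pi$ and can have size close to $n$, so one must use its type-I width bound $cn^p-1$ to absorb the extra column; and for the genuine subtrees of $\pi$ one needs the auxiliary observation that each has size at most $(1-\delta)^{1/p}n$, giving $1+c(1-\delta)n^p\leq cn^p$. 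Without the reroute-at-$v_k$ device (or an equivalent one), the type-II case of the induction --- and hence also the type-I case, which consumes type-II drawings --- does not go through.
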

\begin{proof}
If $T$ consists of a single node then the claim holds trivially.
Otherwise, pick a path $\pi=\langle v_1,\dots,v_\ell \rangle$ with Lemma~\ref{lem:path}.

We first explain how to create type-I drawings, which is very similar to 
Section~\ref{sec:1bend} except that we use type-II drawings to avoid bends.
Assume $v_1$ is placed at an arbitrary point of the plane. 
Let $L^{(1)},\dots,L^{(l_1)}$  and $R^{(1)},\dots,R^{(r_1)}$ be as in Section~\ref{sec:1bend}. 
Recursively compute drawings as follows:
\begin{itemize}
    \item  
a type-I drawing for $L^{(1)}$ and $R^{(1)}$, \item a type-II$r$ drawing for each $L^{(i)}$, $2 \le i \le l_1$, and 
\item a type-II$\ell$ drawing for each $R^{(i)}$, $2\le i \le r_1$.
\end{itemize}

Place the drawings as in Section~\ref{sec:1bend}, except leave only one unit vertical distance between the bounding boxes.
As before, the edges from $v_1$ to the roots of $L^{(1)}$ and $R^{(1)}$ can be drawn straight-line without crossing.
The edges to all other children can now also be drawn straight-line since those children are in an adjacent column to $v_1$.

Any left subtree $\alpha$ uses at most $c|\alpha|^p$ columns and any right subtree $\beta$ uses at most $c|\beta|^p$ columns by induction, so by Lemma~\ref{lem:paths} the width is at most
$$ c|\alpha|^p+c|\beta|^p+1  \leq  c(1-\delta) n^p  +1\leq  cn^p-1$$
for the constant $\delta>0$ from Lemma~\ref{lem:paths} and assuming $c$ is sufficiently large.

\begin{figure}[ht]
\hspace*{\fill}
\includegraphics[page=1]{Non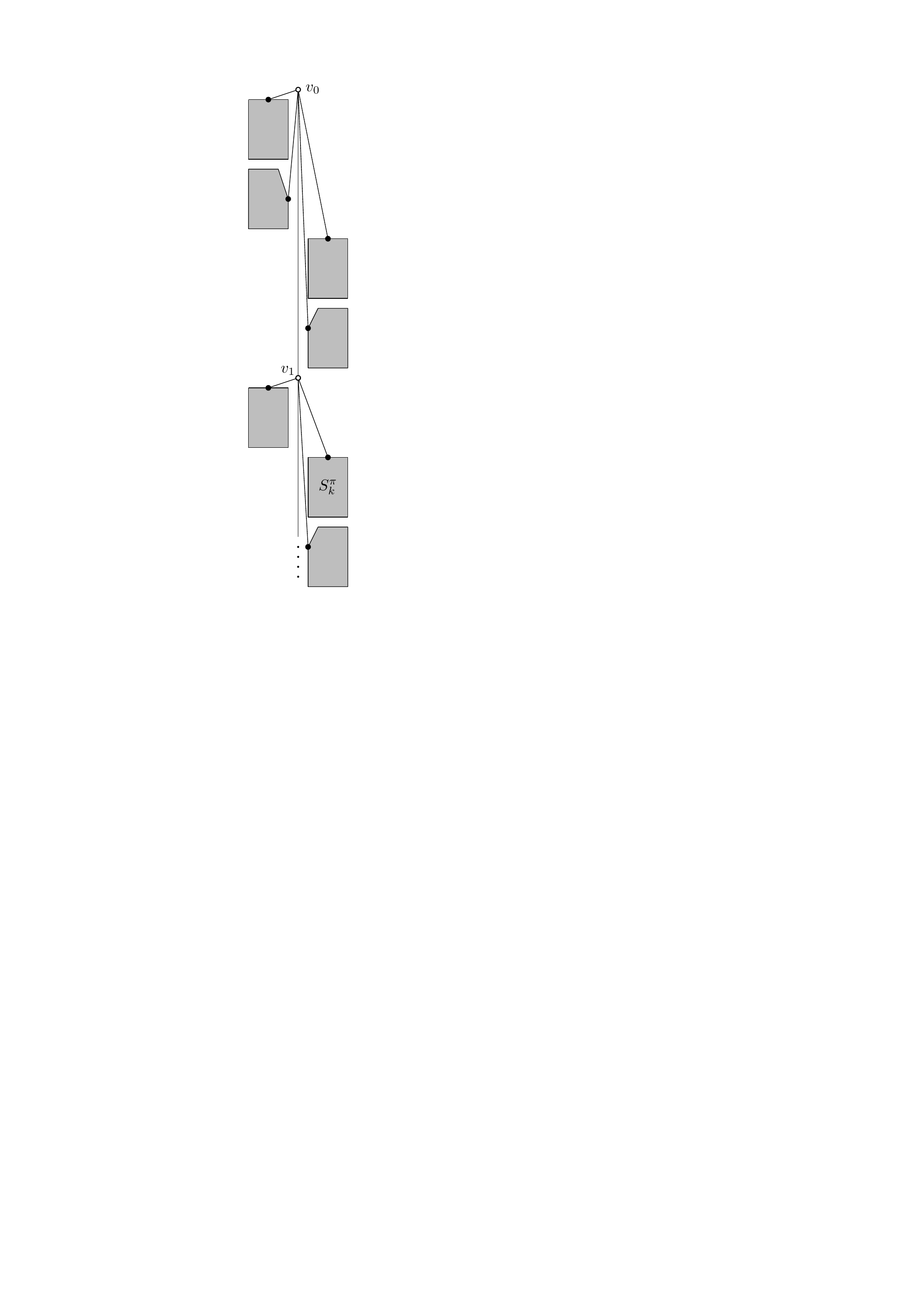}
\hspace*{\fill}
\includegraphics[page=2]{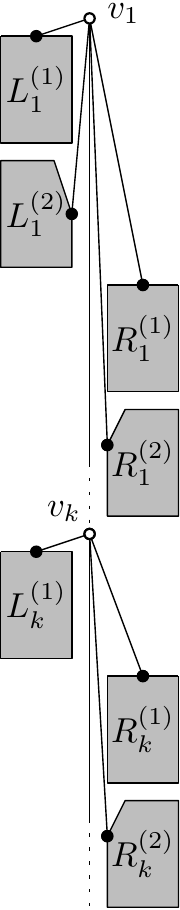}
\hspace*{\fill}
\caption{Constructions of straight-line drawings with linear height.  (Left) Type-I drawings.  (Right) Type-II$\ell$ drawings. 
}
\label{fig:NonUpward}
\end{figure}

Now we turn towards type-II drawings and only explain how to create a
type-II$\ell$ drawing; the other type is symmetric.    Let $\pi=\langle v_1,\dots,v_\ell \rangle$, and let $k\geq 1$ be the minimal index such that $v_k$ has a left subtree.  (If there is no such $v_k$ then the type-I drawing is in fact a type-II$\ell$ drawing.)
We draw $v_1,\dots,v_{k-1}$ as we did for type-I drawings; since they do not have left subtrees this places $v_1,\dots,v_{k-1}$ in the leftmost column.  
At $v_k$, we proceed as follows:
\begin{itemize}
\item As in Section~\ref{sec:1bend}, let $L^{(1)},\dots,L^{(l_k)}$ and $R^{(1)},\dots,R^{(r_k)}$ be the left and right subtrees at $v_k$.
\item We use a type-II$\ell$ drawing for $R^{(1)},\dots,R^{(r_k)}$, and denote by $B(R^{(i)})$ the corresponding bounding box. Place the drawing of $R^{(1)}$ such that the top left of $B(R^{(1)})$ is one unit below the bottom left corner of the bottommost subtree of $v_{k-1}$.  (If $k=1$, then place $R^{(1)}$ arbitrarily.)
For $i=2,\dots,r_k$, place the drawing of $R^{(i)}$ such that the top left corner of $B(R^{(i)})$ is one unit below the bottom left corner of $B(R^{(i-1)})$.

\item  Place $v_k$ in the next row below.  If $k>1$, place $v_k$ vertically below $v_{k-1}$.  If $k=1$ and $r_1>0$, place $v_k$ such that it is one unit to the left 
of $B(R^{(1)})$.  If $k=1$ and there was no right subtree, then place $v_k$ arbitrarily.

\item Let $S(\pi)$ be the subtree rooted at $v_{k+1}$ (thus containing the rest of $\pi$).   Use a type-I drawing for $S(\pi)$, and place it in the rows below $v_k$, with the left side of $B(S(\pi))$ one unit to the right of $v_k$.

\item We use a type-II$\ell$ drawing 
for $L^{(1)},\dots,L^{(l_k)}$.  We know that  $l_k>0$ by choice of $k$.
If $l_k>1$, then place $L^{(l_k)}$ such that the top left corner of $B(L^{(l_k))})$ is one unit below the bottom left corner of $B(S(\pi))$. For $i$ from $l_k-1$ down to 2, place the drawing of $L^{(i)}$ such that the top left corner of $B(L^{(i)})$ is one unit below the bottom left corner of $B(L^{(i+1)})$.  Finally, place the drawing of $L^{(1)}$
in the next rows such that the top left corner of $B(L^{(1)})$ is exactly below $v_k$.  
\end{itemize}

Thus, as in \cite{CH20}, the vertically drawn path is {\em not} the path $\pi$ that we started out with, instead it is
$v_1,\dots,v_k$ plus the vertically-drawn path of $L^{(1)}$. 
But still we obtain a GLR-drawing.
To prove that this drawing has the appropriate width, we need an observation.
\begin{claim}
Let $T'$ be a left or right subtree of path $\pi$ chosen with Lemma~\ref{lem:paths}.  Then $T'$ has size at most
$(1-\delta)^{1/p} n$, for the constant $\delta>0$ from Lemma~\ref{lem:paths}.
\end{claim}
\begin{proof}
This follows directly from the bound in the lemma since necessarily $|T'|^p\leq (1-\delta)n^p$.
\end{proof}

Therefore, at any subtree other than $S(\pi)$, the width is by induction
at most $1+c(1-\delta)n^p \leq cn^p$ for sufficiently large $c$.  At subtree
$S(\pi)$, the recursively obtained type-I drawing has width at most $cn^p-1$,
and so the width again is at most $cn^p$ as desired.

In all cases, we never insert empty rows between drawings, so every row
contains exactly one node 
and the height is $n$ as desired.    
\end{proof}

As in Section~\ref{sec:1bend}, we can stretch the drawing vertically to make it upward, by moving all subtrees at $v_k$ downward and leaving a sufficiently large gap between $B(R^{(r_k)})$  and $B(S(\pi)))$ so that edge $(v_k,v_{k+1})$ can be routed straight-line.  (Details are left to the reader.)  Again the height may not be polynomial and condition \ref{it:min_distance} no longer holds.
\todo{New minor remark; could be omitted if space is tight.}

\subsection{Upward straight-line LR-drawings}
\label{sec:Upward}

As shown in the previous sections, we can achieve width $O(n^{0.48})$,
but the drawings are either not straight-line, or not upward, or have
large (possibly super-polynomial) height.  In this section, we show that with a different
construction, we can bound the height to be $O(n^{1.48})$ in a straight-line, upward GLR-drawing of width $O(n^{0.48})$.  The area
hence is $O(n^{1.96})$, just barely under the trivial $O(n^2)$ bound.

The idea for this is to follow a {\em different} approach of Chan
(`Method 4') for tree-drawing; here we occasionally double the
height used for some subtrees, but this happens rarely enough that
overall the height can still be bounded.  Chan's Method 4 does not produce GLR-drawings (he lets the largest subtree re-use the column of the vertical path)
but we can modify the approach at the cost of increasing
the width by one unit.  To compensate for this we use a type-I
drawing for the largest subtree, so that the overall width does not increase
too much.  

So again we have drawing-types.  One of them is exactly the
type-I drawing used in the previous section.  The other one,
which we call type-III drawing, has the root located in the  top left or top right corner; we 
	use type-III$\ell$ and type-III$r$ to specify whether the root
	is left or right.

\begin{lemma}
Let $p=0.48$.  Given any ordered rooted tree $T$ of size $n$, there exist 
\begin{itemize}
\item a straight-line upward GLR-drawing of type I that has width at most $cn^p-1$ (for some constant $c>0$),
\item a straight-line upward GLR-drawing of type III$\ell$ that has width at most $cn^p$ (for the same constant $c$), and
\item a straight-line upward GLR-drawing of type III$r$ that has width at most $cn^p$ (for the same constant $c$).
\end{itemize}
Furthermore, all drawings have height at most $2n^{1+p}$.
\todo{The height doesn't need the constant $c$; we can explicitly write 2 here.  If anyone is re-reading this part, focus on whether all those equations got changed correctly. jl: I looked through and believe the equations were correctly updated. The phrasing accommodates one and a half rows' strikes me as a bit strange but makes sense in the accounting.}
\end{lemma}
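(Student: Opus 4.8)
The plan is to prove all three bullet points simultaneously by strong induction on $n=|T|$, running two subcases in parallel exactly as in the proof of the previous lemma: one subcase builds the type-I drawing and one builds the type-III$\ell$ drawing (type-III$r$ being symmetric). I would fix a width constant $c$ and two height constants, writing $H_I(n)$ and $H_{III}(n)$ for the heights obtained, and verify at the end that both are at most $2n^{1+p}$ with $H_I<H_{III}$. The base case $n=1$ is immediate. Otherwise I apply Lemma~\ref{lem:path} to get a path $\pi=\langle v_1,\dots,v_\ell\rangle$ with $|\alpha|^p+|\beta|^p\le(1-\delta)n^p$ for every left subtree $\alpha$ and right subtree $\beta$; in particular, by the Claim in Section~\ref{sec:NonUpward}, every left or right subtree of $\pi$ has size at most $(1-\delta)^{1/p}n$.

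For the type-I drawing I would proceed essentially as in Sections~\ref{sec:1bend}--\ref{sec:NonUpward}: place $v_1$ in the top row, walk down $\pi$, and at each $v_i$ recursively draw its left subtrees as type-III$r$ drawings and its right subtrees as type-III$\ell$ drawings, so that the root of each one lands in the column immediately adjacent to that of $\pi$; stack these drawings in consecutive rows below $v_i$ in child order, then place $v_{i+1}$ in the next row. All edges out of $v_i$ are then straight segments into the neighbouring column, $(v_i,v_{i+1})$ is a vertical segment in the free column of $\pi$, and the drawing is strictly upward and planar. The one deviation from the earlier constructions is that the single largest subtree hanging off $\pi$ (say of size $m$) is drawn with a type-I drawing, spending one extra unit of width locally but buying the ``$-1$'' in its bound; this keeps the overall width at $\le c(1-\delta)n^p+1\le cn^p-1$ for $c$ large, via Lemma~\ref{lem:path}. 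For the height, the subtrees at all $v_i$ occupy disjoint horizontal strips, so their heights add, plus $\ell\le n$ rows for $\pi$; using superadditivity of $x\mapsto x^{1+p}$ and the cheaper type-I bound for the largest subtree, this yields a recursion $H_I(n)\le H_I(m)+(\text{type-III heights of the rest})+n$ which I would check is $\le 2n^{1+p}$ by splitting into the cases $m\le n/2$ and $m$ close to $(1-\delta)^{1/p}n$.

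For the type-III$\ell$ drawing I would follow Chan's Method~4 adapted as in \cite{CH20} and Section~\ref{sec:NonUpward}: let $k$ be the least index with $v_k$ having a left subtree (if there is none, the type-I drawing already qualifies), draw $v_1,\dots,v_{k-1}$ and their right subtrees as in the type-I case so they occupy the leftmost column, and at $v_k$ reroute the vertical path into the leftmost left subtree $L^{(1)}$, drawn recursively as a type-III$\ell$ drawing. Crucially --- and this is the modification of Method~4 --- the subtree $S(\pi)$ rooted at $v_{k+1}$ is drawn with a \emph{type-I} drawing (so it uses at most $c|S(\pi)|^p-1$ columns) rather than being allowed to re-use the path column, while the remaining subtrees $L^{(2)},\dots,L^{(l_k)},R^{(1)},\dots,R^{(r_k)}$ are drawn as type-III$\ell$ drawings. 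All subtrees except $L^{(1)}$ are right subtrees of the rerouted path, so to avoid blowing up the width they must be stacked in a single column strip; connecting $v_k$ to their roots by straight, non-crossing, strictly upward segments past the (wide) drawing of $L^{(1)}$ is what forces the construction to occasionally spend extra rows --- morally, to double the vertical space allotted to one subtree. Since $S(\pi)$ uses at most $c|S(\pi)|^p-1<cn^p$ columns and every other subtree is a genuine left/right subtree of $\pi$ (size $\le(1-\delta)^{1/p}n$, hence width $\le c(1-\delta)n^p$), the width stays $\le cn^p$; for the height, the recursion has the form $H_{III}(n)\le H_{III}(|L^{(1)}|)+H_I(|S(\pi)|)+(\text{type-III heights of the other subtrees, one counted with a factor above one})+O(n)$, and since $|L^{(1)}|\le(1-\delta)^{1/p}n$ while the ``other'' subtrees have total size at most $n-|L^{(1)}|$ (small when $L^{(1)}$ is large), the $(1-\delta)$-slack of Lemma~\ref{lem:path} and superadditivity of $x^{1+p}$ absorb the extra rows, giving $H_{III}(n)\le 2n^{1+p}$.

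The main obstacle is the type-III construction, specifically point (iii) of the following three requirements one must nail down: where exactly to place the drawing of $L^{(1)}$, the type-I drawing of $S(\pi)$, and the type-III drawings of $v_k$'s remaining subtrees so that (i) the whole drawing is planar, order-preserving, straight-line and strictly upward; (ii) the column of the rerouted vertical path separates its left and right subtrees, so the result is a genuine GLR-drawing; and (iii) the extra rows inserted so that $v_k$ sees the roots of $L^{(2)},\dots,R^{(r_k)}$ past the drawing of $L^{(1)}$ are bounded in total --- i.e.\ the ``doubling'' happens rarely enough that the height recursion closes to $O(n^{1+p})$ rather than acquiring a multiplicative factor at each recursion level. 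Getting this bookkeeping right is what makes the construction genuinely more delicate than those of the previous sections; the type-I part, the width bounds, and strict upwardness are routine variations on what came before.
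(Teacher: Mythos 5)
Your type-I construction and the width accounting are essentially the paper's, but your type-III construction takes the wrong route, and the gap sits exactly at the point you yourself flag as ``the main obstacle.'' You build the type-III$\ell$ drawing by keeping the path $\pi$ from Lemma~\ref{lem:path} and rerouting it at $v_k$ into $L^{(1)}$, as in the non-upward construction of Section~\ref{sec:NonUpward}; the subtrees that must be reached ``past'' $L^{(1)}$ then pay a doubled height. But the only size bound Lemma~\ref{lem:path} gives for those subtrees is $(1-\delta)^{1/p}n$, which is arbitrarily close to $n$ since $\delta$ is a small constant. If a single such subtree has size $0.9n$, its doubled contribution is already $2\cdot 2(0.9n)^{1+p}>3.4\,n^{1+p}$, so the recursion cannot close to $2n^{1+p}$; no amount of superadditivity or $(1-\delta)$-slack repairs this, because the doubling multiplies the \emph{leading} term whenever the doubled subtree is a constant fraction of $n$. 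Appealing to Lemma~\ref{lem:path} here is structurally the wrong move: that lemma controls \emph{width}, not which subtrees are safe to penalize in \emph{height}.

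The paper's proof resolves this by abandoning $\pi$ entirely for the type-III drawings and following Chan's Method~4 with the threshold $A=n/2^{1/p}$. It splits on whether some middle subtree $S_k$ ($1<k<d$) of the \emph{root} has size at least $n-A$. If not (Case~1), the standard staircase of Lemma~\ref{lem:quadratic} is used with no height doubling at all, and the width of every middle subtree is controlled by $|S_i|\le n-A<0.77n$. If so (Case~2), only the subtrees \emph{other} than $S_k$ are placed in the doubled-size quadrant of the auxiliary rectangle $R$, and these all have size at most $A=n/2^{1/p}$; the threshold is chosen precisely so that $2\cdot 2n_i^{1+p}\le 4\bigl(n/2^{1/p}\bigr)^p n_i=2n^p n_i$ and $2W(|S_i|)\le 2c\bigl(n/2^{1/p}\bigr)^p=cn^p$, which is what makes both the height and width recursions close. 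The heavy subtree $S_k$ itself is drawn type-I at the bottom, never doubled. This selection of \emph{which} subtree escapes the penalty, governed by $A$ rather than by $\pi$, is the key idea your proposal is missing.
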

\begin{proof}
 
1
If $T$ consists of a single node then the claim holds trivially.
Otherwise, pick a path $\pi$ with Lemma~\ref{lem:path}, so that $|\alpha|^p+|\beta|^p\leq (1-\delta)n^p$ for any left and right subtrees $\alpha,\beta$ of the path.    The creation of a type-I drawing is exactly as in the previous section, except that we use type-III drawings in place of type-II drawings so that we have an upward drawing.  Using $H(\cdot)$ to denote the height, we have
$H(n)\leq \sum_{i=1}^d H(n_i)+1$ 
where $d$ is the number of subtrees and $n_i$ is the size of the $i$th subtree.
Since $n_i\leq n$ and $\sum_i n_i=n-1$ 
we have $$\sum_{i=1}^d H(n_i) +1\leq \sum_{i=1}^d 2n_in^p+ 2n^p 
\leq 2n^p(\sum_{i=1}^d n_i + 1)=2n^{1+p}.$$ 
To construct type-III drawings, we proceed much as in Chan \cite{Chan02}, Method 4.  Fix $A= n/2^{1/p}>0.23n$.  (The value of $A$ is different from Chan's, but its use is nearly the same.)    We will completely disregard path $\pi$ and instead pick one subtree of the root based on $A$.  To simplify notations, let $S_1,\dots,S_d$ be the subtrees at the root, enumerated from left to right.  We only explain how to construct a type-III$\ell$ drawing; constructing type-III$r$ drawings is symmetric.  We have two cases:

\medskip\noindent{\bf Case 1:}  Either $d\leq 2$ or the subtrees $S_2,\dots,S_{d-1}$ all have size at most $n-A$.    In this case, recursively construct a type-III$\ell$ drawing for $S_1,\dots,S_{d-1}$ and a type-I drawing for $S_d$.
Combine these drawings with the standard method that was used already in 
Lemma~\ref{lem:quadratic}, see also Figure~\ref{fig:Upward}.  Clearly this is a planar order-preserving straight-line upward drawing, and its height is $1+\sum_i H(n_i) \leq 2n^{1+p}$ with the same analysis as for type-I drawings.
The width $W(n)$ is at most
$1+(cn^p-1)=cn^p$ at $S_d$, and at most $W(n-1)\leq cn^p$ at $S_1$.  At
any other subtree $S_i$, the size is at most $n-A<0.77n$, and the width is at most
$1+W(0.77n)\leq 1+c(0.77)^pn^p \leq cn^p$,
assuming $c$ is sufficiently large.

\begin{figure}[ht]
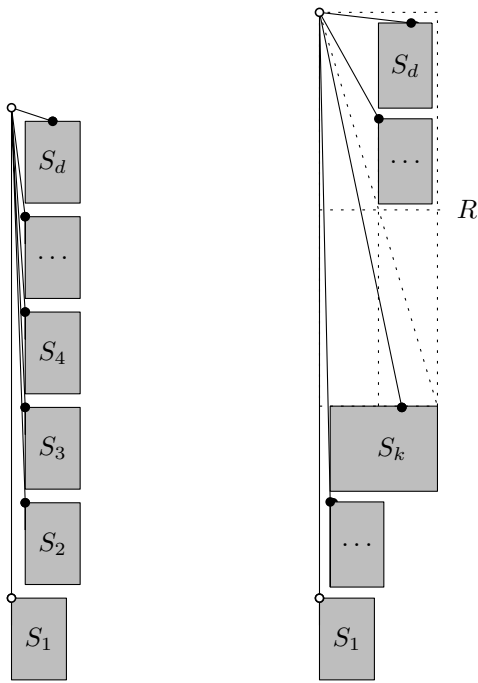

\hspace*{\fill}
\includegraphics[page=2]{Upward.pdf}
\hspace*{\fill}
\includegraphics[page=3]{Upward.pdf}
\hspace*{\fill}
\caption{Constructions of type-III$\ell$ straight-line upward drawings with subquadratic area.  (Left) Case 1. (Right) Case 2.
}
\label{fig:Upward}
\end{figure}

\medskip\noindent{\bf Case 2:}  $d\geq 3$ and at least one subtree $S_k$ with $1< k< d$ has size $n-A$ or more.    
In this case, use a type-I drawing for $S_k$ and $S_d$, and a type-III$\ell$ drawing for all other subtrees.  

We explain how to build the drawing in Figure~\ref{fig:Upward} bottom-up.  Place the drawing of $S_1$ arbitrarily.  Place $S_2,\dots,S_k$ on top of this, with one unit between their bounding boxes, such that the left sides of their bounding boxes are one unit to the right of the root of $S_1$.   Now place an imaginary $W'\times H'$ rectangle $R$ with its left side aligned with the left side of $B(S_1)$ and its bottom side coinciding with the top side of $B(S_k)$.
Here $W'=\max\{|S_k|^p,\max_{i>k} \{2W(|S_i|)\}$, while $H'=3+2\sum_{i>k}H(|S_i|)$.
In particular, the top right quadrant of $R$ is big enough to accommodate the drawings of $S_{k-1},\dots,S_1$, plus one and a half rows. 
We place the root at the top left corner of $R$.  We place the drawings of $S_{k-1},\dots,S_1$ (in this order) below the root and in the top right quadrant of $R$, with the left sides of their bounding boxes aligned and unit vertical distance between boxes.
Note that this does {\em not} use the center-point of $R$ (which is not a grid point due to the odd height of $R$).
The root of $S_k$ is somewhere along the bottom of $R$, hence 
the edge from it to the root of $T$ does not enter the top right quadrant
of $R$ and the drawing is planar.  Clearly it is a GLR-drawing and also strictly upward. In fact, conditions \ref{it:hor_sep}-\ref{it:upward} are all satisfied (but \ref{it:min_distance} is not).

We first analyze the width. At $S_k$,  the width is at most $1+(cn^p-1)\leq cn^p$ by induction.
At any other subtree $S_i$, the size is $n_i\leq A=n/2^{1/p}$
and the width is at most 
$$\max\{1+W(|S_i|),2W(|S_i|) \} \leq 2\cdot c\left(\tfrac{n}{2^{1/p}}\right)^p  = cn^p$$
as desired.  As for the height, $S_k$ contributes at most
$2n_k^{1+p}$ rows
and any other
subtree $S_i$ contributes at the most $2\cdot 2n_i^{1+p}$ rows.  We need two further rows for $R$ (the bottom row was already counted).  Since $n_i\leq A=n/2^{1/p}$ for $i\neq k$ and $\sum_{i} n_i=n-1$, the height is at most
$$2+2n^p n_k + \sum_{i\neq k} 4\left(\tfrac{n}{2^{1/p}}\right)^p n_i  \leq 2n^p + 2n^p \sum_{i=1}^d n_i  = 2n^{1+p}$$
as desired.
\end{proof}

\section{Remarks}
\label{sec:remarks}

In this paper, we studied how to generalize the concept of LR-drawings 
that was previously designed for binary trees~\cite{Chan02,CH20,FPR20}
to trees of higher arity.  To this end, we first generalized a lemma by Chan
about paths for which $|\alpha|^p+|\beta|^p\leq (1-\delta)n^p$ for
constant $p=0.48$, $\delta>0$ and any left and right subtree $\alpha,\beta$.  Then we
explained how to use this path to construct generalized LR-drawings of width $O(n^{0.48})$ and subquadratic area, 
both with and without the
restriction on the drawing being straight-line and/or upward. We conclude the paper by listing some open problems

\begin{itemize}
    \item 
The most natural open problem is to close the gap on the width of GLR-drawings.
Frati et al.~showed that width $\Omega(n^{0.418})$ is sometimes required \cite{FPR20},
and Chan and Huang improved this to $\Omega(n^{0.428})$ \cite{CH20}.  These lower bounds
were for binary trees;
could they perhaps be strengthened if we allow higher arity?
Using ternary trees, one can immediately reduce the size of the lower-bound tree $T_h$ of 
\cite{CH20,FPR20}, by $2^h-1$ (contract every second edge of path $\pi$) without affecting the validity of the lower-bound proof.  Unfortunately, this improves the lower bound only by a lower-order term.

\item We showed that in a GLR-drawing where all additional conditions \ref{it:hor_sep}-\ref{it:min_distance} are satisfied, the width must be $\Omega(n)$.  
Is there an intermediate lower bound that shows up when requiring other subsets of these properties?  We are especially curious about removing condition \ref{it:group_left} (`grouping of left/right subtrees'), which seems very artificial but is crucially required in the proof of Lemma~\ref{lem:allConditions}.
\item
Chan and Huang improved the width of LR-drawings of binary trees \cite{CH20}.  The main idea is that rather than drawing one chosen path $\pi$ as a straight-line, they add an `$i$-twist' to the drawing of path 
$\pi$, using $2^i$ non-vertical edges for $\pi$ while the corresponding other subtrees at these edges use vertical lines.
With this, they can achieve an LR-drawing of width $O(n^{0.438})$ (and even smaller with further improvements).    

It would be interesting to see whether this approach could be generalized to trees of higher arity.  
We cannot generalize the algorithm directly, because 
the subtrees that use vertical lines are defined via a size-property.  In trees of higher arity these  subtrees may well have common parents on $\pi$,
making it impossible to use distinct vertical lines for them, as is necessary in the construction.

\item Our construction of ideal GLR-drawings achieves subquadratic area, but barely.  Can the area be improved?  
\end{itemize}

\small
\bibliographystyle{abbrv}
\bibliography{papers}

\begin{thebibliography}{10}

\bibitem{Bie17-OPTI}
T.~Biedl.
\newblock Ideal drawings of rooted trees with approximately optimal width.
\newblock {\em Journal of Graph Algorithms and Applications}, 21(4):631--648,
  2017.

\bibitem{Chan02}
T.~M. Chan.
\newblock A near-linear area bound for drawing binary trees.
\newblock {\em Algorithmica}, 34(1):1--13, 2002.

\bibitem{Chan20}
T.~M. Chan.
\newblock Tree drawings revisited.
\newblock {\em Discret. Comput. Geom.}, 63(4):799--820, 2020.

\bibitem{CH20}
T.~M. Chan and Z.~Huang.
\newblock Improved upper and lower bounds for {LR} drawings of binary trees.
\newblock In D.~Auber and P.~Valtr, editors, {\em Graph Drawing and Network
  Visualization - 28th International Symposium, {GD} 2020, Vancouver, BC,
  Canada, September 16-18, 2020, Revised Selected Papers}, volume 12590 of {\em
  Lecture Notes in Computer Science}, pages 71--84. Springer, 2020.

\bibitem{DBF09}
G.~{Di Battista} and F.~Frati.
\newblock Small area drawings of outerplanar graphs.
\newblock {\em Algorithmica}, 54(1):25--53, 2009.

\bibitem{DF14}
G.~Di~Battista and F.~Frati.
\newblock A survey on small-area planar graph drawing, 2014.
\newblock CoRR report 1410.1006.

\bibitem{Frati07}
F.~Frati.
\newblock Straight-line orthogonal drawings of binary and ternary trees.
\newblock In {\em Graph Drawing, 15th International Symposium, {GD} 2007,
  Sydney, Australia, September 24-26, 2007. Revised Papers}, pages 76--87,
  2007.

\bibitem{FPR20}
F.~Frati, M.~Patrignani, and V.~Roselli.
\newblock {LR}-drawings of ordered rooted binary trees and near-linear area
  drawings of outerplanar graphs.
\newblock {\em J. Comput. Syst. Sci.}, 107:28--53, 2020.

\bibitem{GR03}
A.~Garg and A.~Rusu.
\newblock Area-efficient order-preserving planar straight-line drawings of
  ordered trees.
\newblock {\em Int. J. Comput. Geometry Appl.}, 13(6):487--505, 2003.

\bibitem{GR07}
A.~Garg and A.~Rusu.
\newblock Area-efficient planar straight-line drawings of outerplanar graphs.
\newblock {\em Discrete Applied Mathematics}, 155(9):1116--1140, 2007.

\end{thebibliography}

\end{document}